\definecolor{cof}{RGB}{219,144,71}
\definecolor{pur}{RGB}{186,146,162}
\definecolor{greeo}{RGB}{91,173,69}
\definecolor{greet}{RGB}{52,111,72}
\newtheorem{lem}{Lemma}
\newtheorem{thm}{Theorem}
\newtheorem{dfn}{Definition}
\begin{document}
%
% paper title
% Titles are generally capitalized except for words such as a, an, and, as,
% at, but, by, for, in, nor, of, on, or, the, to and up, which are usually
% not capitalized unless they are the first or last word of the title.
% Linebreaks \\ can be used within to get better formatting as desired.
% Do not put math or special symbols in the title.
\title{Quality-of-Service in Multihop Wireless Networks: Diffusion Approximation}

% author names and affiliations
% use a multiple column layout for up to three different
% affiliations
\author{\IEEEauthorblockN{Ashok Krishnan K.S. and Vinod Sharma}
\IEEEauthorblockA{Dept. of ECE, Indian Institute of Science, Bangalore, India\\
%\\
%Indian Institute of Science\\
%Bangalore, India 560012\\
Email: \{ashok, vinod\}@iisc.ac.in
}}

% make the title area
\maketitle

% As a general rule, do not put math, special symbols or citations
% in the abstract
\begin{abstract}
	We consider a multihop wireless system.  There are multiple source- destination pairs.  The data from a source may have to pass through multiple nodes. We obtain a channel scheduling policy which can guarantee end-to-end mean delay for the different traffic streams. We show the stability of the network for this policy by convergence to a fluid limit. It is intractable to obtain the stationary distribution of this network. Thus we also provide a diffusion approximation for this scheme under heavy traffic. We show that the stationary distribution of the scaled process of the network  converges to that of the Brownian limit. This theoretically justifies the performance of the system. We provide simulations to verify our claims.
\end{abstract}
% no keywords

% For peer review papers, you can put extra information on the cover
% page as needed:
% \ifCLASSOPTIONpeerreview
% \begin{center} \bfseries EDICS Category: 3-BBND \end{center}
% \fi
%
% For peerreview papers, this IEEEtran command inserts a page break and
% creates the second title. It will be ignored for other modes.
\IEEEpeerreviewmaketitle

\section{Introduction and Literature Review}
A multihop wireless network is constituted by nodes communicating over a wireless channel. Some of the nodes, called source nodes, have data to be sent to other nodes, called receivers. In general, the data will have to be transmitted across multiple hops, over other nodes. It is necessary to develop algorithms that can ensure transmission of these data packets across the network. Any such algorithm has to take into account the topology of the network and the variability of the channels. Further, different types of data, originating from different applications, may have different Quality-of-Service (QoS) requirements, such as delay or bandwidth constraints. To design algorithms that can meet all these requirements is of interest. It is also of interest to demonstrate the performance of these algorithms in theory and by simulations.\\
\indent The characterization of network performance has been approached at different angles, using various mathematical techniques. Stability of flows in a network is a common QoS requirement. Algorithms based on backpressure, such as in in \cite{NeelyJsac}, are throughput optimal, which means that they stabilize  the network if it is possible by any other policy. Another approach is to use the framework of Markov Decision Processes \cite{PRKumar}.\\
\indent The analysis of fluid scaling of networks was pioneered in works such as \cite{RybkoStolyar} and \cite{Dai95}, where it was demonstrated that stability of the fluid limit of the network implies the stability of the network. Further, one may obtain bounds on moments of asymptotic values of the queues using these techniques \cite{DaiMeyn}.  A comprehensive treatment of work in this direction is provided in \cite{MeynBook}.\\
\indent Diffusion approximation of networks \cite{Williams2} study the behaviour of the system under a scaling corresponding to the Functional Central Limit Theorem \cite{BillBook}. The weak limit of the diffusion scaled systems under heavy traffic is generally a reflected Brownian motion \cite{HarrisonBook}, which under certain assumptions on the scaling rate, has a limiting stationary distribution. This distribution may be used as a proxy for the actual distribution of the system state. The diffusion approximation of the Maxweight algorithm is studied in \cite{StolyarSSC}, using properties of certain fluid scaled paths to obtain properties of the diffusion scaled paths, as in \cite{BramsonSSC}. Of these, \cite{StolyarSSC} deals with a discrete time switch under the MaxWeight policy.\\
\indent To further justify the use of the Brownian limit as a proxy for the actual system, one may try to obtain conditions in which the scaling and time limits may be interchanged. Sufficient conditions for the same are studied in \cite{GamarZeevi} and \cite{Budhiraja}, in the case of Jackson Networks. An important requirement for the exchange of limits in \cite{Budhiraja} to hold is the Lipschitz continuity of an underlying Skorohod map, which may not always hold in general.\\
Our main contributions in this work are summarized below.
\begin{itemize}
	\item We propose an algorithm that solves, in every slot, a weighted optimization problem. Using time varying weights that are functions of the queue lengths and mean delay requirements, the algorithm is able to dynamically cater to mean delay requirements of different flows.
	The function being optimized is the same as in \cite{PaperICC}. However, the optimization here is in every slot, and does not use the technique of discrete review. The performance of these two algorithms are same from the point of view of throughput optimality, since both result in the same set of fluid equations, and consequently are both throughput optimal.
	\item We obtain a reflected Brownian motion (with drift) as the weak limit of the system under diffusion scaling, using techniques similar to \cite{StolyarSSC}. This Brownian motion exhibits state space collapse. 
	\item We also show that the stationary distribution of our network converges to the stationary distribution of the limiting Brownian network. This allows us to obtain the stationary distribution of our network by that of the limiting network which is explicitly available. However, our proof does not require Lipschitz continuity of the Skorohod map, unlike \cite{Budhiraja}.
\end{itemize} 
The rest of the paper is organized as follows. In Section II, we describe the system model and formulate the control policy used in the network. In Section III, we describe the two scaling regimes in which we study the network. In Section IV we prove the existence of the Brownian limit, and in Section V we prove that the stationary distribution of the limit of the scaled process is the stationary distribution of the limiting Brownian process.
\section{System Model and Control Policy}
We consider a multihop wireless network (Fig. \ref{fig1}). The network is a connected graph $\mathcal G=(\mathcal V,\mathcal E)$ with $\mathcal V=\{1,2,..,N\}$ being the set of nodes and $\mathcal E$ being the set of links on $\mathcal  V$. The system evolves in discrete time denoted by $t\in \{0,1,2,...\}$. The links are directed, with  link $(i,j)$ from node $i$ to node $j$ having a time varying channel gain $H_{ij}(t)$ at time $t$. Denote the channel gain vector at time $t$ by $H(t)$, evolving as independent and identically distributed (i.i.d.) process across slots with distribution $\gamma$ over a finite set $\mathcal H$. Let $E_h(t)$ denote the cumulative number of slots till time $t$ when the channel state was $h\in\mathcal{H}$. Let the vector of all $E_h(t)$ be denoted by $E(t)$.

 At a node $i$, $A_i^f(t)$ denotes the cumulative (in time) process of exogenous arrival of  packets destined to node $f$. The packets arrive as an  i.i.d sequence across slots, with mean arrival rate  $\lambda_i^f$ and variance $\sigma_i^f$. Let $\lambda$ denote the vector of all $\lambda_i^f$. All traffic in the network with the same destination $f$ is called  \emph{flow} $f$; the set of all flows is denoted by $\mathcal F$. Each flow has a fixed route to follow to its destination. At each node there are queues, with $Q_i^f(t)$ denoting the queue length at node $i$ corresponding to flow $f \in \mathcal F$ at time $t$. For a queue $Q_i^f$ with $i\neq f$, we have the queue evolution given by,
\begin{align}
Q_i^f(t)= Q_i^f(0)+A_i^f(t)+R_i^f(t)-D_i^f(t), \label{actualQueue}
\end{align}
where $R_i^f(t)$ is the cumulative arrival of packets by routing (i.e., arrivals from other nodes), and $D_i^f(t)$ is the cumulative departure of packets. Let us denote, by $S_{ij}^f(t)$, the cumulative number of packets of flow $f$ transmitted over link $(i,j)$. We write,
\begin{align}
R_i^f(t)=\sum_{k\neq i}S_{ki}^f(t),\text{ and } D_i^f(t)=\sum_{j \neq i}S_{ij}^f(t). 
\end{align}
 We assume that the links are sorted into $M$ \emph{interference sets}\ $I_1, I_2, \dotsc, I_M$. At any time, only one link from an interference set can be active. A link may belong to multiple interference sets.  We also assume that each node transmits at unit power. Then, the rate of transmission between node $i$ and node $j$  is given by an achievable rate function of $H(t)$ and $I(t)\in\{I_1, \dotsc I_M\}$, the schedule at time $t$.
 
The vector of queues at time $t$ is denoted by $Q(t)$. Similarly we have the vectors $A(t)$, $R(t)$, $D(t)$ and $S(t)$.

 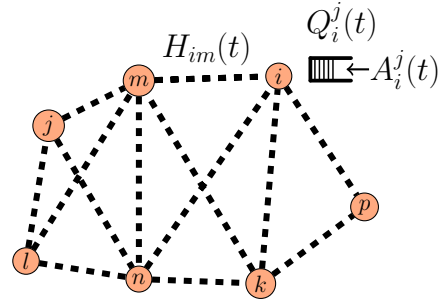
\begin{figure}
 	\centering
 	\setlength{\unitlength}{1cm}
 	\thicklines
 	\begin{tikzpicture}[scale=0.6, transform shape]		
 	\node[draw,shape=circle, fill={rgb:orange,1;yellow,0;pink,2;green,0}, scale=0.6, transform shape] (v1) at (4.7,0.5) {\Huge $k$};
 	\node[draw,shape=circle, fill={rgb:orange,1;yellow,0;pink,2;green,0}, scale=0.6, transform shape] (v2) at (2.0,0.6) {\Huge$n$};
 	\node[draw,shape=circle, fill={rgb:orange,1;yellow,0;pink,2;green,0}, scale=0.6, transform shape] (v3) at (2.0,5.0) {\Huge$m$};
 	\node[draw,shape=circle, fill={rgb:orange,1;yellow,0;pink,2;green,0}, scale=0.6, transform shape] (v4) at (0.0,4) {\Huge$j$};
 	\node[draw,shape=circle, fill={rgb:orange,1;yellow,0;pink,2;green,0}, scale=0.6, transform shape] (v5) at (5.1,5.1) {\Huge$i$};
 	\node[draw,shape=circle, fill={rgb:orange,1;yellow,0;pink,2;green,0}, scale=0.6, transform shape] (v6) at (7,2.2) {\Huge$p$};
 	\node[draw,shape=circle, fill={rgb:orange,1;yellow,0;pink,2;green,0}, scale=0.6, transform shape] (v10) at (-0.5,1.0) {\Huge$l$};
 	\node (v7) at (7.9,5.25) {\huge $A_i^j(t)$};
 	\node (v8) at (6.5,6.3) {\huge $Q_i^j(t)$};
 	\node (v9) at (3.5,5.7) {\huge $H_{im}(t)$};			
 	\draw[line width=0.8mm, dashed] (v2) -- (v1)
 	(v4) -- (v2)
 	(v2) -- (v5)
 	(v3) -- (v5)
 	(v3) -- (v2)
 	(v1) -- (v5)
 	(v1) -- (v3)
 	(v4) -- (v3)
 	(v3) -- (v5)
 	(v5) -- (v6)
 	(v10) -- (v2)
 	(v10) -- (v4)
 	(v10) -- (v3)
 	(v1) -- (v6);
 	\draw[line width=0.5mm, line cap=round](5.8,5)--(6.7,5);
 	\draw[line width=0.5mm, line cap=round](5.8,5.5)--(6.7,5.5);
 	\draw[line width=0.5mm, line cap=round](5.8,5)--(5.8,5.5);
 	\draw[line width=0.2mm](5.9,5)--(5.9,5.5);
 	\draw[line width=0.2mm](6.0,5)--(6.0,5.5);
 	\draw[line width=0.2mm](6.1,5)--(6.1,5.5);
 	\draw[line width=0.2mm](6.2,5)--(6.2,5.5);
 	\draw[line width=0.2mm](6.3,5)--(6.3,5.5);
 	\draw[thick,->] (7.1,5.25) -- (6.6,5.25);
 	\end{tikzpicture}
 	\caption{A simplified depiction of a Wireless Multihop Network}
 	\label{fig1}
 \end{figure}
\indent We want to develop scheduling policies such that the different flows obtain their  end-to-end mean delay deadline guarantees. Define $Q_{ij}^f=\max(Q_i^f-Q_j^f,0),  Q^f(t)=\sum_i Q_{i}^f(t)$, and let $\mathcal{M}(t)$ be the set of feasible rates at time $t$, which depends on $H(t)$. Our network control policy is as follows.  
At each $t$, we obtain the optimal allocation $\mu^*$,
\begin{align}
{\mu}^*=\arg_{\mu\in\mathcal{M}(t)}\max\sum_{i,j,f}\alpha(Q^f(t),\overline{Q}^f) Q_{ij}^f(t)\mu_{ij}^f, \label{optFun}
\end{align}
assuming $Q_{ij}^f>0$ for at least one link flow pair $(i,j),f$. If all $Q_{ij}^f$ are zero, we define the solution to be $\mu^*=0$. We optimize a weighted sum of rates, with more weight given to  flows with larger backlogs, with $\alpha$ capturing the delay requirement of the flow.  The weights $\alpha$  are functions of $Q^f(t)$, and $\overline{Q}^f$ denotes a desired value for the queue length of flow $f$. We use the function
\begin{align}
\alpha(x, \overline x)=1+\frac{a_1}{1+\exp (-a_2(x-\overline x))}. \label{logiF}
\end{align}
Thus, flows requiring a lower mean delay would have a higher weight compared to flows needing a higher mean delay. Flows whose mean delay requirements are not met should get priority over the other flows.The $\overline{Q}^f$ are chosen, using Little's Law, as  $\overline{Q}^f=\lambda^f\overline D$, where $\overline D$ is the target end to end mean delay and $\lambda^f$ is the arrival rate of flow $f$. 

Let $G_{ijf}^{hI}(t)$ be the number of slots till time $t$, in which channel state was $h$, the schedule was $I$ and flow $f$ was scheduled over $(i,j)$. Denote the vector of all $G_{ijf}^{hI}(t)$ by $ G(t)$.  Define the process,
\begin{align}
Z=({A},{E},{G},{D},{R},{S},{Q}), \label{ProcessDefn}
\end{align}
where we have $A =({A}(t),t\geq 0)$ (and likewise for the other processes). This process describes the evolution of the system.  The state of the system at time $t$ is ${Q}(t)$, which takes values in a state space $\mathcal{Q}$. Define the \emph{capacity region} as follows.
\begin{dfn}
	The capacity region $\Lambda$ of the network is the set of all $\lambda$ for which a stabilizing policy exists.
\end{dfn}
\subsection{Notational Convention}
We denote the set of real numbers by $\mathbb{R}$, and the set of integers  by $\mathbb{Z}$.
We  use $\mathscr C[0,\infty)$ to denote the set of all continuous functions from $[0,\infty)$ to $\mathbb{R}$, and $\mathscr D[0,\infty)$ the set of all right continuous functions with left limits (RCLL) from $[0,\infty)$ to $\mathbb R$. We  use $\implies$ to denote weak convergence. For a vector $x$, $|x|$ denotes its norm (modulus). The vector of variables of the form $x_i^j$ over all $i$ and $j$ will be denoted by $(x_i^j)_{i,j}$.\\
The list of symbols used in this paper  is summarized below, in Table \ref{tab:1}.
\begin{table}[h]
	\centering
	\begin{tabular}{|l|l|}
		\hline
		$\mathcal V$ & \textcolor{black}{Set of nodes}\\
		\hline
		$\mathcal E$ & \textcolor{black}{Set of Edges}\\
		\hline
		$\mathcal H$ & \textcolor{black}{Set of Channel States}\\
		\hline
		$\mathcal F$ & \textcolor{black}{Set of Flows}\\
		\hline
		$Q_i^f$ & \textcolor{black}{Queue Length of flow $f$ at node $i$}\\
		\hline 
		$A_i^f$  & \textcolor{black}{Cumulative Exogenous Arrivals to $Q_i^f$}\\
		\hline
		$D_i^f$  & \textcolor{black}{Cumulative Departures from $Q_i^f$}\\
		\hline
		$R_i^f$ & \textcolor{black}{Cumulative Arrivals to $Q_i^f$ by routing}\\
		\hline
		$S_{ij}^f$ & \textcolor{black}{Cumulative number of packets of flow $f$ served on link $(i,j)$}\\
		\hline
		$H_{ij}$ & \textcolor{black}{Channel gain across link $(i,j)$}\\
		\hline
		$E_h$ & \textcolor{black}{Cumulative slots when channel gain was $h$}\\
		\hline
		$G_{ijf}^{hI}$ & \textcolor{black}{Time with channel $h$, schedule $I$, flow $f$  scheduled on $(i,j)$}\\
		\hline	 
		$Z(t)$ & The process $(A(t), E(t), G(t),D(t), R(t), S(t), Q(t), t\geq 0)$\\
		\hline	 
		$Z^n(t)$ & The process corresponding to $n$-th scaled system\\
		\hline
		$\lambda^n$ & Arrival rate of $n$-th system\\
		\hline
		$\psi$ & Normal vector at boundary of capacity region\\
		\hline
		$W^n(t)$ & Workload $=\langle\psi, Q^n(t)\rangle $\\
		\hline	
		$z^n(t)$ & The process $Z^n(\lfloor nt\rfloor )/n$\\
		\hline	
		$\hat z^n(t)$ & The process $Z^n(\lfloor n^2t\rfloor)/n$\\
		\hline
	\end{tabular}
	\caption{\textcolor{black}{List of Symbols}}
	\label{tab:1}
\end{table}

\section{Two Scaling regimes}
 Now we describe the behaviour of $Z$ under two scaling regimes, \emph{fluid} and \emph{diffusion}.
\subsection{Fluid Scaling}
For the process $Z$, define the scaled continuous time process,
\begin{align}
z^n(t) &=\frac{Z(\lfloor nt\rfloor )}{n}, \label{FluScaEq}
\end{align}
where $\lfloor\cdot \rfloor$ represents the floor function. This is called the fluid scaled process. Note that the time argument $t$ on the left side is continuous, while that on the right is discrete. Whether a time argument is discrete or continuous will be generally clear from the context. Let $z^n$ denote the process $(z^n(t),t\geq 0)$. We have,
\begin{align}
z^n=(a^n,e^n,g^n,d^n,r^n,s^n,q^n),
\end{align}
with the scaling in (\ref{FluScaEq}) being applied to each component of $Z$. Note that $a^n=(a_i^{f,n})_{i,f}$, and a similar notational convention holds for all the constituent functions of $z$. The limit of $z^n$, as $n\to\infty$, offers insight into the behaviour of the system under the scheduling policy in (\ref{optFun}). The following result may be shown for our policy.
\begin{lem}
The algorithm described by the slotwise optimization in (\ref{optFun}) stabilizes the system for all arrival rate vectors $\lambda$ in the interior of $\Lambda$. Here, stability implies that the Markov chain $Q(t)$ is positive recurrent.	
\end{lem}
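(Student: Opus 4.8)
The plan is to use the standard fluid-limit approach to stability pioneered in \cite{RybkoStolyar}, \cite{Dai95}: show that every fluid limit of the scaled process $z^n$ is, after a finite time, absorbed at the origin, and then invoke the Dai–Meyn type criterion that this implies positive recurrence of the underlying Markov chain $Q(t)$. First I would establish the usual a priori bounds: since arrivals are i.i.d.\ with finite mean, the functional strong law of large numbers gives $a^{n}_i{}^{f}(t)\to\lambda_i^f t$ u.o.c.\ almost surely, and similarly $e^n_h(t)\to\gamma(h)\,t$; the service and routing processes $s^n, d^n, r^n, g^n$ are Lipschitz in $t$ (bounded increments per slot because rates live in the finite set $\mathcal M$), so along a subsequence $z^n\to z$ u.o.c.\ to a set of Lipschitz ``fluid paths'' $(\bar a,\bar e,\bar g,\bar d,\bar r,\bar s,\bar q)$ satisfying the fluid analogue of \eqref{actualQueue}: $\bar Q_i^f(t)=\bar Q_i^f(0)+\lambda_i^f t+\bar R_i^f(t)-\bar D_i^f(t)$, together with the flow-conservation relations $\bar R_i^f=\sum_{k}\bar S_{ki}^f$, $\bar D_i^f=\sum_j \bar S_{ij}^f$, and the interference constraint that in each set $I_m$ the total scheduled fraction is at most $1$, aggregated against the channel law $\gamma$.

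The core step is the drift estimate. I would introduce the Lyapunov function $L(\bar Q(t))=\sum_f \int_0^{\bar Q^f(t)}$ (or more simply the quadratic $\sum_f (\bar Q^f(t))^2$ weighted to match the $\alpha$-weights) and differentiate along a regular point $t$ where $\bar Q^f(t)>0$. Because the slotwise optimization \eqref{optFun} maximizes $\sum_{i,j,f}\alpha(Q^f,\overline Q^f)Q_{ij}^f\mu_{ij}^f$ over $\mu\in\mathcal M(t)$, in the fluid limit the realized service rates must maximize the corresponding linear functional $\sum \alpha(\bar Q^f,\overline Q^f)\bar Q_{ij}^f \dot{\bar S}_{ij}^f$ against the time-averaged rate polytope. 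Since $\lambda$ lies in the interior of $\Lambda$, there is $\epsilon>0$ and a feasible randomized allocation serving each flow at rate $\lambda_i^f+\epsilon$ while respecting routing; comparing the MaxWeight-type allocation to this feasible point yields $\frac{d}{dt}L(\bar Q(t))\le -\delta$ whenever $|\bar Q(t)|$ is bounded below, exactly as in the backpressure analysis of \cite{NeelyJsac}. The weights $\alpha(\cdot,\cdot)$ are bounded in $[1,1+a_1]$ and continuous, so they only rescale the drift by bounded factors and do not affect the negativity; this is the point where one must be slightly careful, handling the $\max(\cdot,0)$ in $Q_{ij}^f$ and the per-flow (rather than per-link) weighting. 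Hence $L(\bar Q(t))=0$, i.e.\ $\bar Q(t)=0$, for all $t\ge T$ for some deterministic $T$ depending only on $L(\bar Q(0))$.

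Finally I would invoke the standard theorem (e.g.\ \cite{Dai95}, \cite{MeynBook}) that if the fluid model is stable — every fluid limit started from $|\bar Q(0)|\le 1$ hits $0$ by a fixed time $T$ and stays there — then the Markov chain $Q(t)$ is positive recurrent, using that $Q(t)$ is an irreducible (on its reachable state space) countable-state Markov chain under the i.i.d.\ arrival and channel assumptions. The main obstacle I anticipate is not the Lyapunov computation itself but verifying cleanly that the fluid limit of the slotwise optimizer genuinely solves the limiting linear program — one must argue that the oscillation of $\alpha(Q^f(\lfloor nt\rfloor),\overline Q^f)$ over a slot vanishes under scaling and that the time-averaged service rates inherit the arg-max property, which requires a careful interchange-of-limits argument of the kind done in \cite{StolyarSSC}; everything else is routine bookkeeping over the finitely many links, flows, channel states and schedules.
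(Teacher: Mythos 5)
Your proposal follows essentially the same route as the paper: the paper's own proof is exactly a fluid-limit argument — a.s.\ existence of subsequential fluid limits, a Lyapunov function showing the limiting fluid o.d.e.\ is globally asymptotically stable, and the standard Dai/Meyn-type implication that fluid stability gives positive recurrence of $Q(t)$ — with the details deferred to \cite{PaperICC}. Your drift step (comparing the weighted max-weight allocation to a feasible randomized allocation for interior $\lambda$, with the bounded weights $\alpha\in[1,1+a_1]$ and the arg-max transfer handled in the fluid limit) is consistent with that outline.
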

To prove this, we first show that, almost surely, a subsequential limit exists for the family $\{z^n,n\geq 0\}$. This limit $z$ is called the fluid limit, which obeys a deterministic ordinary differential equation (o.d.e.). The proof follows by showing that this o.d.e. is globally asymptotically stable, by constructing a suitable Lyapunov function. The stability of the o.d.e. implies the stability of the associated stochastic process. The detailed proof is similar to that in \cite{PaperICC}, and the algorithm here and in \cite{PaperICC} will have the same fluid limit equations.\\
\indent Studying the fluid limit gives us insights into the stability properties of the system. However, it only proves the existence of a stationary distribution. In order to predict the behaviour of the system, one needs the stationary distribution, or some approximation to the same.  However, explicitly computing the stationary distribution for our system is not feasible. Thus we define the heavy traffic regime, and the associated diffusion scaling, below. We will also show that the stationary distribution of our system process converges to that of the limiting Brownian network. This will provide us an approximation of the stationary distribution under heavy traffic, the scenario of most practical interest.
\subsection{Diffusion Scaling}
Consider a sequence of systems, $Z^n$. Each system differs from the other in its arrival rate, $\lambda^n$. The $\lambda^n$ are chosen such that, as $n\to\infty$, $\lambda^n\to\lambda^*$, and,
\begin{align}
\lim_{n\to\infty}n\langle\psi,\lambda^n-\lambda^*\rangle=b^*\in\mathbb{R}, \label{cond1}
\end{align}
where $\lambda^*$ is a point on the boundary of $\Lambda$, and $\psi$ denotes the outer normal vector to $\Lambda$ at the point $\lambda^*$. This is known as heavy traffic scaling. We will also assume that $\lambda^*$  falls in the relative interior of one of the faces of the boundary of  $\Lambda$. For this sequence of systems, we define the diffusion scaling, given by,
\begin{align}
\hat z^n(t)=\frac{Z^n(\lfloor n^2t\rfloor)}{n}. \label{DiffScalDefn}
\end{align}
Let $\hat z^n$ denote the process $(\hat z^n(t),t\geq 0)$. As before, we have,
\begin{align*}
\hat z^n=(\hat a^n,\hat e^n,\hat g^n,\hat d^n,\hat r^n,\hat s^n,\hat q^n).
\end{align*}
 
Define the system workload $W^n(t)$ in the direction $\psi$ as,
\begin{align}
W^n(t)=\langle\psi,Q^n(t)\rangle.
\end{align}
Define the scaled process $\hat w^n=(\hat w^n(t),t\geq 0)$ by,
\begin{align*}
\hat w^n(t)=\frac{W(\lfloor n^2 t\rfloor)}{n}.
\end{align*}
 Define an invariant point to be a vector $\phi$ that satisfies, for some $k>0$,
\begin{align}
\alpha(\phi)\phi=k\psi,
\end{align}
where $\alpha(\phi)$ is the vector of all $\alpha(\phi_j)$. Then, we have the following result, which characterizes the weak convergence of the diffusion scaled processes.
\begin{thm}\label{CompleteTheorem}
	Consider a sequence  $\{\hat z^n,n\in\mathcal{N}\}$ as described above, under heavy traffic scaling satisfying (\ref{cond1}),and $\mathcal{N}$ a sequence of positive integers $n$ increasing to infinity. Assume that the fluid scaled $z=(a,e,g,d,r,s,q)$ has components $a=(a_i^f)_{i,f}$ and $e=(e_h)_{h\in\mathcal H}$ that satisfy, with probability one, as $m\to\infty$, for any $T>0$, for all $i$, $j$, $f$, $c\in\mathcal{H}$, 
		\begin{align}
		&\max_{0\leq \ell\leq nT}\sup_{0\leq \epsilon\leq 1}|a_i^{f,n}(\ell+\epsilon)-a_i^{f,n}(\ell)-\lambda_i^f\epsilon|\to 0, \label{aWeakLim}\\
		&\max_{0\leq \ell\leq nT}\sup_{0\leq \epsilon\leq 1}|e_c^n(\ell+\epsilon)-e_c^n(\ell)-\gamma_c\epsilon|\to 0.\label{eWeakLim}
		\end{align}	
	Further, assume that,
	\begin{align}
	\hat q^n(0)\implies c\phi,
	\end{align}
	where $c$ is a non negative real number. Then, the sequence $\{\hat w^n,n\in\mathcal{N}\}$ converges weakly to a reflected Brownian motion $\hat w$ as $n\to\infty$. Further, $\{\hat q^n,n\in\mathcal{N}\}$ converges weakly to $\phi\hat w$.
	\end{thm}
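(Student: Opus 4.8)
The plan is to follow the programme of Bramson and Stolyar: first show that the diffusion-scaled queue vector collapses onto the one-dimensional ray $\{r\phi:r\ge 0\}$, and then read off the limiting workload dynamics, which reduce to a one-dimensional Skorohod reflection problem driven by a Brownian motion.

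First I would set up the workload decomposition. From the queue evolution (\ref{actualQueue}) and the expressions for $R_i^f$, $D_i^f$ in terms of $S_{ij}^f$, summing against the outer normal $\psi$ makes the internal routing terms telescope into the scheduling constraints; because $\lambda^*$ lies in the relative interior of a single face of $\partial\Lambda$, the direction $\psi$ is supported by one active (bottleneck) constraint, so $W^n$ behaves like the workload of a single server. Concretely I would write
\[
\hat w^n(t)=\hat w^n(0)+\hat x^n(t)+\hat y^n(t),
\]
where $\hat x^n$ is the centered net input (exogenous arrivals minus the nominal service rate $\langle\psi,\lambda^*\rangle$, diffusion scaled) and $\hat y^n$ is non-decreasing and equals the cumulative unused capacity in the direction $\psi$. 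Using the i.i.d. structure of the arrivals (rates $\lambda_i^f$, variances $\sigma_i^f$) and of the channel process ($\gamma$ on $\mathcal H$), Donsker's theorem and the random-time-change theorem give $\hat x^n\implies X$, a Brownian motion with drift $b^*$ (coming from (\ref{cond1})) and variance determined by $\{\sigma_i^f\}$ and the channel statistics; the oscillation hypotheses (\ref{aWeakLim})--(\ref{eWeakLim}) ensure there are no asymptotic jumps, so the sequence is $C$-tight.

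The crux is the state space collapse: for every $T>0$,
\[
\sup_{0\le t\le T}\bigl\|\hat q^n(t)-\phi\,\hat w^n(t)\bigr\|\implies 0.
\]
I would first prove the multiplicative version, namely that $\sup_{t\le T}\|\hat q^n(t)-\phi\hat w^n(t)\|/(1\vee\sup_{t\le T}\|\hat q^n(t)\|)\to 0$ in probability, by contradiction: if the ratio stayed bounded away from $0$ along a subsequence, one rescales the process in a hydrodynamic window around the offending times; by (\ref{aWeakLim})--(\ref{eWeakLim}) the rescaled primitives converge to linear paths, so every cluster point is a solution of the fluid-model equations for the policy (\ref{optFun}). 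The Lyapunov analysis behind Lemma~1 --- the weight $\alpha$ of (\ref{logiF}) makes the weighted-MaxWeight drift push $Q$ toward the set where $(\alpha(Q^f)Q^f)_f$ is parallel to $\psi$, i.e.\ toward the invariant ray --- shows every such fluid solution reaches and stays on $\{r\phi:r\ge 0\}$, contradicting the assumed deviation. Combining this with the $C$-tightness of $\hat w^n$ (hence boundedness in probability of $\|\hat q^n\|$, since $\|\hat q^n\|\approx\|\phi\|\hat w^n$) upgrades the multiplicative statement to the additive one above, and the initial condition $\hat q^n(0)\implies c\phi$ already sits on the manifold, so no starting transient survives.

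Finally, with SSC in hand the limit is identified. When $\hat w^n$ is bounded away from $0$ the scheduler sees a queue vector close to $\phi\hat w^n$, and the defining relation $\alpha(\phi)\phi=k\psi$ forces the optimizer in (\ref{optFun}) to extract the full capacity in the direction $\psi$; hence $\hat y^n$ can increase only near $\hat w^n=0$, so asymptotically $(\hat w^n,\hat y^n)$ solves the one-dimensional Skorohod problem for $\hat w^n(0)+\hat x^n$. The one-dimensional reflection map is Lipschitz, so by the continuous mapping theorem, $\hat x^n\implies X$ and $\hat w^n(0)\implies c\langle\psi,\phi\rangle$ give $\hat w^n\implies\hat w$, the reflected Brownian motion obtained by reflecting $c\langle\psi,\phi\rangle+X$ at the origin; combining with the SSC bound yields $\hat q^n\implies\phi\hat w$. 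The main obstacle I expect is the state space collapse step --- establishing that the hydrodynamically rescaled processes have fluid limits globally attracted to the invariant ray under the slotwise weighted-MaxWeight rule, with the exceptional null sets controlled uniformly; a secondary delicate point is verifying that $\psi$ annihilates the routing terms, which is where the relative-interior assumption on $\lambda^*$ is essential.
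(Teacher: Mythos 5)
Your proposal is correct and follows essentially the same Bramson--Stolyar route as the paper: a decomposition of the workload into a free part (shown to converge to a Brownian motion with drift $b^*$ via Donsker's theorem, as in Lemma \ref{uTheorem}) plus an unused-capacity/regulator part, state space collapse obtained from hydrodynamically rescaled fluid paths and Lyapunov attraction to the invariant ray, and identification of $\hat w$ through the one-dimensional reflection. The only differences are presentational: the paper identifies the limit via the Skorohod representation theorem and uniqueness of the regulator along subsequences, keeping the routing terms inside the regulating process $V(t)=X(t)+\langle\psi,R(t)\rangle-\langle\psi,D(t)\rangle$ rather than having $\psi$ annihilate them, instead of your continuous-mapping argument with the Lipschitz one-dimensional Skorohod map.
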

	The proof of this Theorem is detailed in the following section.
	\section{Brownian Limit}
 The existence of the Brownian limit is demonstrated as follows. We write the scaled workload $\hat w^n$ as the sum of two terms, one of which converges to a \emph{free} Brownian motion, and the second as its corresponding \emph{regulating} process. Together, they act as a reflected Brownian motion. Let us define, for a channel state $h\in\mathcal H$, $\mathcal D_h$ as the set of all feasible rate vectors. Let us denote the maximum allocation in the direction $\psi$, when the channel is in state $h$, by $\mu_h$, 
\begin{align}
\mu_h =\max_{y\in\mathcal{D}_h}\langle {\psi},y \rangle, \ h\in\mathcal{H}.
\end{align}
Let $\mu_1$ be the vector $(\mu_1,\dotsc,\mu_{|\mathcal{H}|})$, and $\mu_2$ the vector $(\mu_1^2,\dotsc,\mu^2_{|\mathcal{H}|})$.
Define the random variables,
\begin{align*}
X^{\mu}(t)=\mu_{H(t)}, \ t\geq 1.
\end{align*}
The random variables $\{X_{\mu}(t),t\geq 0\}$ are i.i.d, with mean  and variance given by,
\begin{align*}
\hat{\mu} &:=\langle\mu_1,\gamma\rangle,\\ 
\hat{\sigma}^2 &:=\mathbb{E}[(X_{\mu}(1)-\hat{\mu})^2] =\langle\mu_2,\gamma\rangle-\hat{\mu}^2\geq 0. \label{varianceEqn}
\end{align*}
Define the cumulative process,
\begin{align}
X(t) &= \sum_{k=1}^t X^{\mu}(k).
\end{align}
This is the cumulative maximum possible service in the direction $\psi$. We can write,
\begin{align}
{U}(t)&=W(0)+\langle {\psi},{A}(t)\rangle-X(t),\\
V(t)&=X(t)+\langle {\psi},{R}(t)\rangle
-\langle {\psi},{D}(t)\rangle,
\end{align}
and, consequently,
\begin{align}
W(t)=U(t)+V(t).
\end{align}
The same equation holds for  $W^n$, $U^n$ and $V^n$. Define,
\begin{align*}
\hat u^n(t)=\frac{U^n(\lfloor n^2t\rfloor)}{n},\  \hat v^n(t) &=\frac{V^n(\lfloor n^2t\rfloor)}{n}.
\end{align*}
Thus we have,
\begin{align}
\hat w^n(t)=\hat u^n(t)+\hat v^n(t).
\end{align}
Let us denote $\hat w^n=(\hat w^n(t),t\geq 0)$, $\hat u^n=(\hat u^n(t),t\geq 0)$ and $\hat v^n=(\hat v^n(t),t\geq 0)$.
We have the following result about the convergence of $\{\hat u^n,n\geq 0\}$.
\begin{lem}\label{uTheorem} Assuming that the initial condition converges weakly to an invariant point, i.e,
	\begin{align}
	\hat{w}^n(0)\implies\hat{w}(0),
	\end{align}
	where $\alpha(\hat{w})\hat{w}=\psi$. Then, it follows that,
	\begin{align}
	\hat{u}^n\implies\hat{u},
	\end{align}
	in $\mathscr D[0,\infty)$ where $\hat u=(\hat u(t),t\geq 0)$ is a Brownian motion with drift, given by,
	\begin{align}
	\hat{u}(t)=\hat{w}(0)+b^*t+\sigma \mathscr{B}(t), \label{BrownEqn}
	\end{align}
	where $\mathscr{B}(t)$ is a standard Brownian motion, $\sigma^2=\sum_{i,f}(\sigma_i^f)^2+\hat{\sigma}^2$, and $b^*$ is given by (\ref{cond1}).
\end{lem}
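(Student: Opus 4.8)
The plan is to exploit a structural feature of $\hat u^n$ that $\hat w^n$ lacks: $\hat u^n$ is \emph{policy-free}. From $U^n(k)=W^n(0)+\langle\psi,A^n(k)\rangle-X^n(k)$ with $A^n(0)=X^n(0)=0$ (so that $\hat u^n(0)=\hat w^n(0)$), the process is assembled only from the exogenous arrivals and the i.i.d.\ channel process, neither of which depends on the scheduling rule (\ref{optFun}) or on the queue lengths; hence a direct functional central limit theorem (FCLT) argument applies. I would decompose
$$\hat u^n(t)=\hat w^n(0)+M^n(t)+b^n(t),\quad M^n(t)=\frac1n\sum_{k=1}^{\lfloor n^2t\rfloor}\Big(\langle\psi,A^n(k)-A^n(k-1)\rangle-X^\mu(k)-\big(\langle\psi,\lambda^n\rangle-\hat\mu\big)\Big),$$
where $M^n$ is a centered diffusion-scaled random walk and $b^n(t)=\frac1n\big(\langle\psi,\lambda^n\rangle-\hat\mu\big)\lfloor n^2t\rfloor$ is a deterministic drift, and then establish the weak limit of each of the three summands before recombining them.

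For $M^n$: its increments are i.i.d.\ across $k$ for each fixed $n$ (the arrival streams are i.i.d.\ in time and mutually independent, and independent of the channel process), with zero mean and a variance that equals $\sigma^2=\sum_{i,f}(\sigma_i^f)^2+\hat\sigma^2$, the last term being $\mathrm{Var}(X^\mu(1))$ from (\ref{varianceEqn}). Donsker's FCLT, in its triangular-array form — legitimate here since only the means of the arrival increments vary with $n$ while the common variance is fixed, and a Lindeberg condition follows from the finite-variance hypothesis together with the convergence $\lambda^n\to\lambda^*$ — then yields $M^n\implies\sigma\mathscr B$ in $\mathscr D[0,\infty)$ with $\mathscr B$ a standard Brownian motion.

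For $b^n$: here lies the one genuinely non-routine step, namely the \emph{critical-load identity} $\hat\mu=\langle\psi,\lambda^*\rangle$, i.e.\ the maximal mean service in the direction $\psi$ equals the arrival-side workload rate at the boundary point. This is where the description of $\Lambda$ must be invoked: by the (Tassiulas--Ephremides-type) characterization of the capacity region, together with $\psi$ being the outer normal to $\Lambda$ at $\lambda^*$ and the assumption that $\lambda^*$ lies in the relative interior of a face — so that exactly the workload inequality $\langle\psi,\lambda\rangle\le\hat\mu$ is the binding constraint there — the identity follows. Granting it, $b^n(t)=\frac{\lfloor n^2t\rfloor}{n^2}\cdot n\langle\psi,\lambda^n-\lambda^*\rangle\to b^*t$ uniformly on compact sets by (\ref{cond1}), the floor correction being $O(n^{-2})$ because $\langle\psi,\lambda^n-\lambda^*\rangle=O(n^{-1})$.

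Finally I would assemble the pieces: $\hat w^n(0)=\langle\psi,Q^n(0)\rangle/n\implies\hat w(0)$ by hypothesis and is independent of $(M^n,b^n)$ (the initial queue vector is independent of the post-$0$ arrivals and channel states), while $b^n$ is deterministic with a continuous limit; hence the triple converges jointly, and since the candidate limit $\hat u(t)=\hat w(0)+b^*t+\sigma\mathscr B(t)$ is a.s.\ continuous, a convergence-together/continuous-mapping argument gives $\hat u^n\implies\hat u$ in $\mathscr D[0,\infty)$, which is exactly (\ref{BrownEqn}). The main obstacle is the critical-load identity $\hat\mu=\langle\psi,\lambda^*\rangle$ (equivalently, that the workload drift vanishes asymptotically once the $b^*$-perturbation of (\ref{cond1}) is subtracted); once that is secured, the remainder is standard FCLT bookkeeping. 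A secondary point worth flagging is that $X^n$ has the same law for every $n$, since the channel law $\gamma$ is held fixed along the sequence, so all the $n$-dependence resides in the arrival streams.
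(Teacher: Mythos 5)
Your proposal is correct and follows essentially the same route as the paper: write $\hat u^n$ as the initial workload plus centered arrival/channel terms plus the deterministic drift, invoke Donsker's theorem for the centered part, and use the identity $\hat\mu=\langle\psi,\lambda^*\rangle$ together with (\ref{cond1}) to get the drift $b^*t$. The only cosmetic differences are that you merge the two centered processes into a single random walk (the paper keeps them as two independent Brownian limits whose variances sum to $\sigma^2$) and that you spell out the justification of the critical-load identity, which the paper simply asserts.
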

\begin{proof}
	This is an application of Donsker's theorem \cite{BillBook}. We can write $\hat u^n$ as,
	\begin{align*}
	\hat u^n(t) &=\frac{U^n(n^2t)}{n}=\hat  w^n(0)+\langle\psi, \hat a^n(t)\rangle-\hat x^n(t),\\
	&=\hat  w^n(0)+\langle\psi, \hat a^n(t)-\lambda^n nt\rangle-(\hat x^n(t)-\hat{\mu} nt)\\
	&\ \ \ \ \ \ \ \ \ \ \ \ \ +(\langle\psi,\lambda^n\rangle-\hat{\mu}) nt.
	\end{align*}
	Since $\hat{\mu}=\langle\psi,\lambda^*\rangle$, from assumption (\ref{cond1}), it follows that,
	\begin{align*}
	(\langle\psi,\lambda^n\rangle-\hat{\mu}) nt\to b^*t.
	\end{align*}
	The convergence of the processes $(\langle \psi,\hat a^n(t)-\lambda^n nt\rangle,t\geq 0)$ and $(\hat x^n(t)-\hat{\mu} nt,t\geq 0)$ to independent Brownian motions, by Donsker's theorem, now implies the result.
\end{proof}
 
Now we outline the proof of Theorem \ref{CompleteTheorem}.

\begin{proof}[Proof of Theorem \ref{CompleteTheorem}]
	From Lemma \ref{uTheorem}, using the Skorohod representation Theorem \cite{BassBook}, one can construct a probability space where we have $\mathscr C[0,\infty)$ valued processes $\hat u_{S}^n$ and $\hat u_S$,  such that, almost surely,
	\begin{align*}
	\hat u_S^n\to\hat u_S\ u.o.c.,
	\end{align*}
	where $\hat u_S^n$ and $\hat u_S$ are identical in distribution to $\hat u^n$ and  $\hat u$. Thus $\hat u_S$ is the Brownian motion given in (\ref{BrownEqn}). We augment this probability space to include the other components of $Z$ as well. On this probability space, we will have the functions $\hat v^n$ and $\hat w^n$ as before. Note that, almost surely, for any sequence of $n$ increasing to infinity, properties (\ref{aWeakLim}) and (\ref{eWeakLim}) hold \cite{StolyarSSC}.
	
	The convergence of $\{\hat w^n,n\in\mathcal{N}\}$ now weakly follows if we show that, for any subsequence $\mathcal N_1$ of $\mathcal N$, there exists a subsequence $\mathcal N_2$, such that, as $n\to\infty$ along $\mathcal N_2$, almost surely,
	\begin{align}
	\hat v^n\to\hat v,\ u.o.c.,
	\end{align}
	where, almost surely, $\hat v(t)$ is continuous and finite  for $t\in[0,\infty)$, $\hat{v}(0)=0$ and if $\hat{w}(t)>0$, then $t$ is not a point of increase of $\hat{v}(t)$. Then, it can be shown \cite{HarrisonBook} that  $\hat v$ is unique, and called the \emph{regulator} corresponding to $u$, and can be represented as,
\begin{align}
\hat v(t)=-\inf_{0\leq s\leq t}u(s).
\end{align}
Consequently, $\hat w(t)=\hat u(t)+\hat v(t)\geq 0$.  This $w(t)$ is called the \emph{reflected} or \emph{regulated} Brownian motion corresponding to $u$. Then it follows that $\hat w^n$ converges weakly to a reflected Brownian motion as $n\to\infty$.\\
Therefore it suffices to show that $\hat v^n$ has a limit $\hat v$ along $\mathcal{N}_2$  which satisfies the requisite properties. This is proven in the Appendix. These properties also imply that, $\hat q^n$ converges weakly to $\phi\hat w$.
\end{proof}
Now that we have established the existence of a limiting Brownian motion, we proceed to demonstrate how the stationary distribution of the limit of the scaled systems is equivalent to tha stationary distribution of the Brownian motion, in the next section.
\section{Exchange of Limits}
We have the following result.
	\begin{thm}\label{LimitExchangeTheorem}
		The stationary distribution of the limiting process is the limit of the stationary distributions of the constituent processes, i.e.,
		\begin{align}
		\hat q^n(\infty)\implies\phi\hat w(\infty), \text{ as } n\to\infty,
		\end{align}
		where the time argument being infinity denotes the respective stationary distributions.
		\end{thm}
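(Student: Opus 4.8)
The plan is to follow the classical route for an exchange-of-limits result: (i) establish tightness of the family of diffusion-scaled stationary distributions; (ii) show that every weak subsequential limit is a stationary distribution of the limiting reflected Brownian motion $\phi\hat w$; and (iii) conclude by uniqueness of the latter. For each sufficiently large $n$, condition (\ref{cond1}) (with $b^*<0$, which is forced if the stationary distributions are to exist) places $\lambda^n$ in the interior of $\Lambda$, so by Lemma 1 the chain $Q^n$ is positive recurrent; write $\hat q^n(\infty)$ for a random vector with the diffusion-scaled stationary law, i.e. $\hat q^n(\infty)\stackrel{d}{=}Q^n(\infty)/n$, and set $\hat w^n(\infty)=\langle\psi,\hat q^n(\infty)\rangle$. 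Note that when the $n$-th system is started from stationarity, $(\hat q^n(t),t\ge 0)$ is a stationary process and $\hat q^n(0)\stackrel{d}{=}\hat q^n(\infty)$.

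\emph{Step 1: tightness.} This is the step I expect to be the main obstacle, and it is where the argument must depart from \cite{Budhiraja}: instead of Lipschitz continuity of the Skorohod map we use a direct Lyapunov/drift estimate on the chain $Q^n$ at its natural time slots, with all constants tracked to be uniform in $n$. Decomposing $Q^n$ into its component along the invariant ray $\{c\phi:c\ge0\}$ and a transverse part $Q^{n,\perp}$, the key facts are: whenever the workload $W^n$ is positive the schedule (\ref{optFun}) serves the $\psi$-direction at capacity, so $W^n$ has net one-slot drift $\langle\psi,\lambda^n-\lambda^*\rangle=O(1/n)$ by (\ref{cond1}) while its increment variance is $O(1)$; and, by the fluid-equation analysis underlying Lemma 1 and the state-space-collapse argument of Theorem \ref{CompleteTheorem}, the transverse part $Q^{n,\perp}$ has $\Theta(1)$ (not $O(1/n)$) negative drift whenever it is large, since the heavy-traffic degeneracy affects only the $\psi$-direction. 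Using a Lyapunov function of the form $\mathcal L(q)=\beta\,(\langle\psi,q\rangle)^2+\mathcal L_\perp(q^{\perp})$ for suitable $\beta>0$ and an $\mathcal L_\perp$ inherited from the stability proof, and taking expectations under the stationary law (where the expected drift vanishes), one gets $\mathbb E[W^n(\infty)]\le Cn$ and $\mathbb E[\,|Q^{n,\perp}(\infty)|\,]\le C$ with $C$ independent of $n$. Dividing by $n$ yields $\sup_n\mathbb E[\,|\hat q^n(\infty)|\,]<\infty$, so $\{\hat q^n(\infty)\}$ is tight; moreover $\hat q^{n,\perp}(\infty)=Q^{n,\perp}(\infty)/n\implies 0$, so every subsequential limit of $\hat q^n(\infty)$ is supported on the ray $\{c\phi:c\ge0\}$ (steady-state state-space collapse).

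\emph{Step 2: every subsequential limit is stationary for $\phi\hat w$.} Fix a subsequence along which $\hat q^n(\infty)\implies\nu$; by Step 1 we may write $\nu=\mathrm{law}(\phi\,\xi)$ for some $\xi\ge0$. Initializing the $n$-th system from stationarity, $(\hat q^n(t),t\ge0)$ is a stationary process with $\hat q^n(0)\implies\mathrm{law}(\phi\xi)$. Applying Theorem \ref{CompleteTheorem} — extended in the routine way to a random initial scalar $c=\xi$, by conditioning on $\hat q^n(0)$ and using its asymptotic concentration on the ray established in Step 1 — gives $(\hat q^n(t),t\ge0)\implies(\phi\hat w(t),t\ge0)$ with $\hat w$ the reflected Brownian motion of Theorem \ref{CompleteTheorem} started from $\hat w(0)\stackrel{d}{=}\xi$. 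A weak limit of stationary processes is stationary, so $\phi\hat w$ is stationary; hence $\nu=\mathrm{law}(\phi\hat w(0))$ is a stationary distribution of $\phi\hat w$, i.e. $\xi$ is distributed according to a stationary law of the one-dimensional reflected Brownian motion $\hat w$.

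\emph{Step 3: uniqueness and conclusion.} By Lemma \ref{uTheorem}, $\hat w$ is a reflected Brownian motion on $[0,\infty)$ with drift $b^*<0$ and non-degenerate variance $\sigma^2=\sum_{i,f}(\sigma_i^f)^2+\hat\sigma^2$; such a process is positive recurrent with the unique stationary distribution $\hat w(\infty)\sim\mathrm{Exp}\!\big(2|b^*|/\sigma^2\big)$. Hence $\xi\stackrel{d}{=}\hat w(\infty)$ and $\nu=\mathrm{law}(\phi\hat w(\infty))$, independently of the subsequence. Since $\{\hat q^n(\infty)\}$ is tight and all its subsequential limits equal the single law $\mathrm{law}(\phi\hat w(\infty))$, the whole sequence converges: $\hat q^n(\infty)\implies\phi\hat w(\infty)$, which is the claim. $\square$
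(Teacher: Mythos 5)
Your Steps 2 and 3 are essentially the paper's own argument: start each system from its stationary law $\pi_n$, use the process-level convergence of Theorem \ref{CompleteTheorem} to conclude that any subsequential weak limit $\pi^*$ of $\{\pi_n\}$ is a stationary law of the limiting reflected Brownian motion, and use uniqueness of that stationary law (drift $b^*<0$) to get convergence of the full sequence. The divergence is in Step 1, and that is where your sketch has a genuine gap. First, the assertion that ``whenever $W^n>0$ the schedule (\ref{optFun}) serves the $\psi$-direction at capacity'' is not true for this policy: (\ref{optFun}) maximizes a weighted backpressure objective, and it extracts the full $\psi$-directional service $\mu_{H(t)}$ only when the queue vector is (approximately) aligned with the invariant direction. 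Away from the collapse manifold the $\psi$-directional service can be strictly below capacity -- this is precisely why the appendix needs Lemma \ref{RestFluidThm} to show that $\tilde v$ stops increasing only after $\mathcal{L}_3$ has become small. So the claimed unconditional $O(1/n)$ one-slot drift of $W^n$ is not available; it must be coupled to the transverse term. Second, the asserted $\Theta(1)$ negative drift of $Q^{n,\perp}$, uniform in $n$, is exactly the steady-state state-space-collapse statement, which you do not prove and which is nontrivial here: the weights $\alpha(\cdot)$ are nonlinear, so the set of invariant points $\{\phi:\alpha(\phi)\phi=k\psi,\ k>0\}$ is not a linear ray and ``the transverse part'' is not simply an orthogonal projection; moreover, multihop routing makes per-slot drift estimates for backpressure-type policies substantially harder than in the single-switch setting where such drift arguments are standard. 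Since your Step 2 also relies on Step 1 to guarantee that subsequential limits of $\hat q^n(\infty)$ concentrate on the invariant ray (otherwise Theorem \ref{CompleteTheorem} cannot be invoked with the limiting initial condition), this gap propagates forward.

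For comparison, the paper obtains tightness by a different and more elementary route that you could adopt: it verifies the sufficient conditions (\ref{Abound})--(\ref{Dbound}) and (\ref{compactLim}) (adapted from Budhiraja et al.) directly. The bounds (\ref{Abound})--(\ref{Dbound}) follow from Doob's maximal inequality applied to the centered arrival processes and, for the service processes, from partitioning the queue state space into finitely many regions on each of which the schedule is a fixed function of the channel state, so that the centered service is again a sum of i.i.d.\ pieces. The condition (\ref{compactLim}) then follows from these bounds together with the uniform fluid draining-time estimate (\ref{DrainTimeEquation}), $T_n\leq nT_1$, which yields $\mathbb{E}|\hat q_x^n(t|x|)|^2\leq C t|x|$ for $t\geq T_1|x|$, uniformly in $n$. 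This avoids both the Lipschitz Skorohod map of the Jackson-network literature and the delicate steady-state drift estimates your sketch would require; if you prefer your Lyapunov route, you would essentially have to redo an Eryilmaz--Srikant-type steady-state analysis for this nonlinear-weight multihop policy, which is a substantial piece of work in its own right rather than a routine step.
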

To prove this result, we first define a new set of fluid limit processes, given by,
\begin{align}
\bar z^{n,r}(t)=\frac{Z^n(\lfloor rt\rfloor)}{r}.
\end{align}
Let $\bar z^{n,r}=(\bar a^{n,r},\bar e^{n,r},\bar g^{n,r},\bar d^{n,r},\bar r^{n,r},\bar s^{n,r},\bar q^{n,r})$, denote the process $(\bar z^{n,r}(t),t\geq 0)$, and $\bar z^n$ the fluid limit process obtained, for each $n$, by taking the limit $r\to\infty$. This limit exists just as in the previous section. For each $Z^n$, let $\pi_n$ denote the stationary distribution of the queues. These exist because for each $n$, the system $Q^n$ is stable \cite{PaperICC}. The draining time (time for all queues to reach level zero) for the $n$-th fluid system will be denoted by $T_n$. From \cite{PaperICC}, we can see that $T_n$ is inversely proportional to the distance from the boundary of the capacity region $\Lambda$. It is also easy to see that, due to (\ref{cond1}), the distance to the boundary of the capacity region, which is the plane whose normal vector is $\psi$, grows as $\frac{1}{n}$. Hence we may write,
\begin{align}
T_n\leq nT_1, \label{DrainTimeEquation}
\end{align}
for some finite $T_1$, assuming that the initial fluid level is unity.\\
Now, we state a sufficient condition for the sequence $\{\pi_n, n\geq 0\}$ to be tight. Note that by writing $\hat q_x^n(\cdot)$ we indicate that the initial condition of the queue is $x$.
\begin{lem}
	Assume that, for all nodes $i$, $j$, flows $f$, for any $n\geq 1$, $t\geq 0$, we have, for some $B<\infty$,
	\begin{align}
	\mathbb{E}[\sup_{0\leq k \leq t}|A_i^{f,n}(k)-\bar a_i^{f,n}(k)|^2]\leq Bt,\label{Abound}\\
	\mathbb{E}[\sup_{0\leq k \leq t}|R_i^{f,n}(k)-\bar r_i^{f,n}(k)|^2]\leq Bt,\label{Rbound}\\
	\mathbb{E}[\sup_{0\leq k \leq t}|D_i^{f,n}(k)-\bar d_i^{f,n}(k)|^2]\leq Bt.\label{Dbound}
	\end{align}
	Further, assume that there exists $T$ such that for all $t\geq T$, we have,
	\begin{align}
	\lim_{|x|\to\infty}\sup_n\frac{1}{|x|^2}\mathbb{E}|\hat{{q}}_x^n(t|x|)|^2=0. \label{compactLim}
	\end{align}
	Then the sequence of distributions $\{\pi_n\}$ is tight.
\end{lem}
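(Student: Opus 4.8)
\noindent The plan is to derive tightness from a uniform bound $\sup_{n}\mathbb{E}_{\pi_n}[\,|\cdot|/n\,]<\infty$ on a first moment of the scaled stationary queue, after which tightness of $\{\pi_n\}$ (equivalently, of the laws of $\hat q^n(\infty)=Q^n(\infty)/n$) is immediate from Markov's inequality. The obstruction to obtaining this directly is that \eqref{compactLim} controls the scaled process only after a diffusion time proportional to the size of the current state --- a restatement of the fact, recorded in \eqref{DrainTimeEquation}, that the fluid model drains in a time linear in its level --- so stationarity cannot simply be substituted into a fixed-horizon Foster--Lyapunov inequality. I would circumvent this by sampling the chain at a sequence of state-dependent ``checkpoint'' times.

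Concretely, fix the constant $T$ of \eqref{compactLim} and a large radius $R_0$, and define slot times $\sigma_0=0$, $\sigma_{k+1}=\sigma_k+\lfloor n^2 T\max(|Q^n(\sigma_k)|/n,\,R_0)\rfloor$, with $Y_k:=Q^n(\sigma_k)/n$ the scaled state at the checkpoints. Since $\sigma_{k+1}-\sigma_k$ is a function of $Y_k$ alone, $\{Y_k\}_{k\ge0}$ is a time-homogeneous Markov chain, and by the Markov property $Y_{k+1}$ given $Y_k=y$ is distributed as $\hat q^n_y(T\max(|y|,R_0))$. Choosing $R_0$ large enough that $\sup_n |x|^{-2}\mathbb{E}|\hat q^n_x(T|x|)|^2<\tfrac12$ for all $|x|\ge R_0$ (possible by \eqref{compactLim}), we get $\mathbb{E}[\,|Y_{k+1}|^2\mid Y_k\,]\le\tfrac12|Y_k|^2$ on $\{|Y_k|>R_0\}$; on $\{|Y_k|\le R_0\}$, the recursion \eqref{actualQueue}, the bounds \eqref{Abound}--\eqref{Dbound}, and the uniform bound on the mean per-slot increment of $Q^n$ give $\mathbb{E}[\,|Y_{k+1}|^2\mid Y_k\,]\le\kappa$ for a constant $\kappa$ not depending on $n$. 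This is a geometric drift for $V(y)=|y|^2$, so $\{Y_k\}$ has a stationary law $\tilde\pi_n$ with $\sup_n\mathbb{E}_{\tilde\pi_n}[\,|\cdot|^2\,]\le 2\kappa<\infty$.

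To pass back to $\pi_n$ I would use the cycle (Kac / inversion) identity for positive recurrent chains, $\mathbb{E}_{\pi_n}[f]=\mathbb{E}_{\tilde\pi_n}\big[\textstyle\sum_{j<\sigma_1}f(Q^n(j))\big]\,/\,\mathbb{E}_{\tilde\pi_n}[\sigma_1]$, taken with $f(q)=|q|/n$. On $\{Y_0=y\}$ the cycle length satisfies $\sigma_1\le n^2T(|y|+R_0)$, and over this horizon the fluid path started at level $y$ stays of order $|y|+R_0$ --- here I would invoke that the fluid trajectories are non-increasing in a Lyapunov function comparable to $|\cdot|^2$, as used to prove fluid stability in \cite{PaperICC} --- while \eqref{Abound}--\eqref{Dbound} bound the deviation of the scaled queue from that path by $O(\sqrt{|y|+R_0}\,)$ in $L^2$; hence $\mathbb{E}\big[\max_{j<\sigma_1}|Q^n(j)/n|^2\,\big|\,y\big]\le C(|y|+R_0)^2$. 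Bounding the numerator by $\mathbb{E}_{\tilde\pi_n}\big[\sigma_1\max_{j<\sigma_1}|Q^n(j)/n|\big]$, it is at most $C'n^2\,\mathbb{E}_{\tilde\pi_n}[(|\cdot|+R_0)^2]\le C''n^2$, while the denominator is at least $\tfrac12 n^2TR_0$ for large $n$; the two factors of $n^2$ cancel and $\sup_n\mathbb{E}_{\pi_n}[\,|\cdot|/n\,]<\infty$ follows.

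The main obstacle is precisely this state-dependent horizon: it forces the embedded-chain construction, and, more delicately, it forces one to control the scaled process uniformly over whole cycles whose lengths grow with the state --- which is where \eqref{Abound}--\eqref{Dbound} and the fluid Lyapunov bound do the real work, and where the diffusion time-scaling must be tracked carefully so that the powers of $n^2$ in the cycle identity cancel. A subsidiary point, needed only to legitimize the cycle identity, is the a priori finiteness of $\pi_n$, which is supplied by the positive recurrence of each $Q^n$ noted earlier. An alternative route with the same crux is a contradiction argument: were $\{\pi_n\}$ not tight, one could extract $n_k\to\infty$ and radii $K_k\to\infty$ with $\pi_{n_k}$ placing mass bounded away from $0$ outside the ball of radius $K_k$, run the stationary chain for a diffusion time of order $K_k$ from such a far-away state, and invoke \eqref{compactLim} to contradict stationarity.
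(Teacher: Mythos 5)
Your proposal takes a genuinely different route from the paper: the paper does not prove this lemma internally at all, but obtains it as a direct consequence of Theorems 3.2, 3.3 and 3.4 of \cite{Budhiraja} --- the hypotheses \eqref{Abound}--\eqref{Dbound} and \eqref{compactLim} are precisely the conditions of those theorems, and the paper's own work is deferred to the next lemma, which verifies these conditions for this model. You instead reconstruct the tightness argument from scratch, in the spirit of \cite{GamarZeevi} and \cite{DaiMeyn}: sample the chain at state-dependent checkpoints whose spacing is a diffusion time proportional to the current scaled state, read \eqref{compactLim} as a multiplicative Foster--Lyapunov (geometric drift) condition for $|y|^2$ along the embedded chain, and transfer the resulting moment bound back to $\pi_n$ through Kac's ratio formula, using \eqref{Abound}--\eqref{Dbound} together with the fluid draining bound \eqref{DrainTimeEquation} to control excursions within a cycle. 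What the citation buys the paper is brevity and inherited rigor; what your argument buys is self-containedness and a clear view of where each hypothesis enters (in particular, that \eqref{compactLim} supplies the drift and \eqref{Abound}--\eqref{Dbound} are only needed to bound within-cycle behaviour), which is essentially the program behind the cited theorems anyway.

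Two steps would need shoring up before this counts as a complete proof. First, on $\{|Y_k|\le R_0\}$ the bound $\mathbb{E}[\,|Y_{k+1}|^2\mid Y_k\,]\le\kappa$ does \emph{not} follow from per-slot increment bounds as you suggest: the horizon is of order $n^2TR_0$ slots, so increment bounds alone give a scaled state of order $n$, not a constant. You must argue exactly as you do later for the cycle maxima: compare with the fluid path of the $n$-th system, which drains from scaled level $R_0$ within the cycle because its drain time scales as in \eqref{DrainTimeEquation} (enlarging $T$ to exceed the drain constant $T_1$ is harmless since \eqref{compactLim} holds for all $t\ge T$), and bound the deviation from it via \eqref{Abound}--\eqref{Dbound}, which after division by $n^2$ is indeed uniform in $n$. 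Second, the Kac/inversion identity presupposes that the embedded chain $\{Y_k\}$ is positive recurrent with the stationary law $\tilde\pi_n$ you use; the drift inequality gives moments but not irreducibility of the sampled chain, so either address this or bypass $\tilde\pi_n$ altogether by running the drift inequality under stationarity of $Q^n$ itself. Neither issue is fatal, and both are repairable with the tools you already invoke.
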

This result is a consequence of Theorems 3.2, 3.3 and 3.4 of \cite{Budhiraja}. We show that the conditions of this theorem hold in our case.

\begin{lem}
	In our system model, conditions (\ref{Abound})-(\ref{Dbound}) hold. Further, there exists $T$ such that (\ref{compactLim}) holds. Consequently, the sequence $\{\pi_n\}$ is tight.
\end{lem}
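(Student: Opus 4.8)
The plan is to verify, one at a time, the three hypotheses required by the preceding sufficient condition for tightness — the second-moment bounds (\ref{Abound})--(\ref{Dbound}) and the asymptotic-negligibility condition (\ref{compactLim}) — after which tightness of $\{\pi_n\}$ is immediate by that lemma. The bounds (\ref{Abound})--(\ref{Dbound}) I would dispose of first. For the exogenous arrivals this is essentially Doob's inequality: $A_i^{f,n}(k)-\bar a_i^{f,n}(k)=A_i^{f,n}(k)-\lambda_i^{f,n}k$ is a mean-zero random walk with i.i.d.\ increments of variance $\sigma_i^f$, so $\mathbb{E}[\sup_{0\le k\le t}|A_i^{f,n}(k)-\lambda_i^{f,n}k|^2]\le 4\,\mathbb{E}|A_i^{f,n}(\lceil t\rceil)-\lambda_i^{f,n}\lceil t\rceil|^2\le Bt$, uniformly in $n$ since $\lambda^n\to\lambda^*$ keeps the variances bounded. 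For the routing and departure processes I would use that the per-slot service increment $S_{ij}^{f}(\ell{+}1)-S_{ij}^{f}(\ell)$ is bounded by a constant $\mu_{\max}$ (unit power, finite channel alphabet, finitely many links) and is a deterministic function of $Q^n(\ell)$ and the fresh channel state $H(\ell{+}1)$; writing $D_i^{f,n}$ and $R_i^{f,n}$ as a martingale (subtract the one-step conditional means) plus its compensator, Doob bounds the $L^2$-maximum of the martingale part by $4\mu_{\max}^2t$, while the compensator is a sum of Lipschitz expected-rate terms that coincides with $\bar d_i^{f,n}$, resp.\ $\bar r_i^{f,n}$, along the fluid limit, giving (\ref{Rbound})--(\ref{Dbound}) with an $n$-free constant. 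The one subtlety is that the departure process is policy-driven rather than renewal, which is exactly why the martingale/compensator split is used instead of a direct renewal estimate.

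The substance of the argument is (\ref{compactLim}). Fix a unit vector $y$, write $x=|x|y$, and note that with $\hat q_x^n(0)=x$ the unscaled queue starts at magnitude $n|x|$. Setting $r=n|x|$ and $\tilde q^n(\tau):=\hat q_x^n(\tau|x|)/|x|=Q^n(\lfloor r\,n\tau\rfloor)/r=\bar q^{\,n,r}(n\tau)$, one recognizes $\tilde q^n$ as the $n$-th system's fluid-scaled path at scale $r$ evaluated at time $n\tau$, and $\frac{1}{|x|^2}\mathbb{E}|\hat q_x^n(t|x|)|^2=\mathbb{E}|\bar q^{\,n,r}(nt)|^2$. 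As $|x|\to\infty$ (with $n$ fixed, $r\to\infty$), the fluid-limit analysis underlying the stability lemma gives $\bar q^{\,n,r}\to\bar q^{\,n}$, the fluid limit of system $n$ from initial condition $y$, which drains to $0$ by time $T_n$; by the draining-time bound (\ref{DrainTimeEquation}), $T_n\le nT_1$, which matches the $n^2$ time-scaling of the diffusion regime, so choosing $T:=T_1$ we get $nt\ge nT_1\ge T_n$ for every $t\ge T$ and hence $\bar q^{\,n}(nt)=0$ for all $n$ simultaneously. To upgrade $\bar q^{\,n,r}(nt)\to 0$ to convergence of second moments, the bounds (\ref{Abound})--(\ref{Dbound}) — with their $n$-free constant — control the fluctuation of $Q^n$ about its fluid path over the relevant horizon by $O(\sqrt{\,\cdot\,})$ in $L^2$ uniformly in $n$, which gives uniform integrability of $\{|\bar q^{\,n,r}(nt)|^2\}$ in $r$ and $n$; combined with $\bar q^{\,n}(nt)=0$ this yields $\sup_n\mathbb{E}|\bar q^{\,n,r}(nt)|^2\to 0$ as $|x|\to\infty$, i.e.\ (\ref{compactLim}) with $T=T_1$. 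An equivalent route, which I would keep in reserve, is to run the Lyapunov function from the fluid-stability proof through the stochastic system and bound $\mathbb{E}\,V(Q^n(k))$ directly, thereby avoiding any appeal to Lipschitz continuity of the multidimensional reflection map.

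The hard part, and the step I would be most careful about, is the uniformity in $n$ in the previous paragraph: as $n\to\infty$ the load approaches criticality, so both the fluid draining time $T_n$ and the one-step negative drift of $Q^n$ degrade, and a crude estimate would not be uniform. The resolution is structural — the bad $n$-dependence is exactly the factor $n$ in $T_n\le nT_1$, and it is cancelled by the $n^2$ time-scaling of the diffusion regime (equivalently, by (\ref{cond1}) the boundary distance shrinks like $1/n$), while the constants in (\ref{Abound})--(\ref{Dbound}) carry no $n$-dependence at all. Once (\ref{Abound})--(\ref{Dbound}) and (\ref{compactLim}) are established, the preceding lemma delivers tightness of $\{\pi_n\}$, which completes the proof.
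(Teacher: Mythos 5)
Your handling of (\ref{Abound}) and of (\ref{compactLim}) is essentially the paper's own argument: Doob's inequality applied to the centered i.i.d.\ arrival walk, and, for (\ref{compactLim}), the same cancellation the paper uses -- the $L^2$ deviation of $Q^n$ from its fluid path grows like the square root of elapsed (unscaled) time with an $n$-free constant, while the fluid path started from level $n|x|$ drains by time $T_n\, n|x|\leq n^2T_1|x|$ by (\ref{DrainTimeEquation}), so for $t\geq T_1$ the scaled second moment is $O(t/|x|)$ uniformly in $n$. The paper runs this through the queueing identity and a triangle inequality rather than through uniform integrability, but the estimate is the same, and your choice $T=T_1$ is what the paper intends.

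The gap is in your treatment of (\ref{Rbound})--(\ref{Dbound}). The Doob decomposition only controls $D_i^{f,n}$ (resp.\ $R_i^{f,n}$) minus its compensator, which for the service on a link is $\sum_{\ell\leq t}\sum_{h}\gamma_h\,\bar S_{ij}^f(Q^n(\ell),h)$, a \emph{random}, state-dependent process; the conditions require the deviation from the deterministic fluid path $\bar d_i^{f,n}$, $\bar r_i^{f,n}$, so you still owe an $L^2$ bound of order $\sqrt{t}$ on the difference between the compensator and the fluid path. You assert this by saying the compensator consists of ``Lipschitz expected-rate terms'' that ``coincide with'' the fluid path, but the per-slot expected rate is a discontinuous (piecewise constant) function of the queue state -- it jumps whenever the argmax in (\ref{optFun}) switches -- and the compensator is evaluated along the random trajectory $Q^n(\ell)$, not along the fluid trajectory. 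Controlling that difference would require controlling $|Q^n-\bar q^n|$, which in this lemma is \emph{deduced from} (\ref{Abound})--(\ref{Dbound}); as written, your argument is circular at exactly the policy-dependent step, which is where the whole difficulty sits. The paper avoids any such comparison: because $\mathcal H$ and the rate set are finite, the policy induces only finitely many channel-to-rate maps $\mathbb F_1,\dotsc,\mathbb F_{K_1}$, the queue state space is partitioned into regions $\mathcal Q_m$ on which the map is $\mathbb F_m$, and $S_{ij}^f(t)=\sum_m\sum_{k\in \hat T_m(t)}\mathbb F_m(H(k))$; centering each region's sum at $\bar{\mathbb F}_m=\mathbb E[\mathbb F_m(H(1))]$ leaves martingale-type fluctuations whose second moments add up to $B_2\,\mathbb E\big[\sum_m|\hat T_m(t)|\big]=B_2t$, with the centering process identified with $s_{ij}^f$ via stationarity. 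To repair your route you would need an argument of this kind (or some other bound on compensator minus fluid path); the martingale part alone does not deliver (\ref{Rbound})--(\ref{Dbound}).
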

\begin{proof}
	Since the process $\{A_i^{f,n}(t)-a_i^{f,n}(t),t\geq 0\}$ is a martingale, we can use Doob's inequality \cite{BassBook} to obtain,
	\begin{align*}
	\mathbb{E}[\sup_{0\leq k \leq t}|A_i^{f,n}(s)-\bar a_i^{f,n}(s)|^2] &\leq B_1^{'} \mathbb{E}|A_i^{f,n}(t)-\bar a_i^{f,n}(t)|^2,\\
	&\leq B_1^{'}t\mathbb{E}|A_i^{f,n}(1)-\bar a_i^{f,n}(1)|^2,\\
	&=B_1 t,
	\end{align*}
	where the second inequality follows from the i.i.d nature of the arrival process \cite{GutBook}. Hence,  (\ref{Abound}) holds.\\	
	The bounds for $R$ and $D$ would hold if a corresponding bound holds for the $S_{ij}^f$ processes. Let us call the slotwise allocation process as $\bar S_{ij}^f$, where,
	\begin{align*}
	S_{ij}^f(t)=\sum_{t^{'}=1}^{t}\bar S_{ij}^f(Q(t^{'}),H(t^{'})),
	\end{align*}
	since  $\bar S_{ij}^f$ depends on both the queue state at time $t$, and the channel state at time $t$. Let $\mathcal S$ be the set of possible values $S(t)$ can take. Since $\mathcal H$ is finite (and consequently, $\mathcal S$), there are only a finite set of mappings from $\mathcal H$ to $\mathcal S$. This set of mappings will be denoted by $\{\mathbb F_1,\dotsc,\mathbb F_{K_1}\}$. Each $S(Q(t),H(t))$ will take the value of one of these functions. It is easy to see that the state space of queues can be partitioned as,
	\begin{align}
	\mathcal Q=\cup_{m=1,\dotsc,K_1}\mathcal Q_m,
	\end{align}
	where, if $Q(t)\in\mathcal Q_m$, we have $S(Q(t),H(t))=\mathbb F_m(H(t))$, and the $\mathcal Q_m$ are disjoint. Now we can write,
	\begin{align}
	S_{ij}^f(t)=\sum_{t^{'}=1}^t\sum_{m=1}^{K_1}\mathbb F_m(H(t))\mathbf 1_{\{Q(t)=m\}},
	\end{align}
	where $\mathbf 1$ is the indicator function. We can further rewrite this as,
	\begin{align}
	S_{ij}^f(t)=\sum_{m=1}^{K_1}\sum_{k\in \hat T_m(t)}\mathbb F_m(H(k)),
	\end{align}
	where $\hat T_m(t)$ is the set of time slots till $t$ when the queue state was in $\mathcal Q_m$. Since the system is stationary, we can also obtain,
	\begin{align}
	s_{ij}^f(t)=\mathbb{E}[S_{ij}^f(t)].
	\end{align}
	Thus, we may write, with $\bar{\mathbb F}_m=\mathbb{E}[\mathbb F_m(H(1))]$,
	\begin{align*}
	|S_{ij}^f(t)-s_{ij}^f(t)|^2\leq B_2^{'} \sum_{m=1}^{K_1}\left|\sum_{k\in \hat T_m(t)}\mathbb F_m(H(k))-\bar{\mathbb F}_m\right|^2,
	\end{align*}
	where $B_2^{'}$ depends only on $K_1$. For any $m$, along $k\in\hat T_m(t)$, $\mathbb F_m(H(k))$ is an i.i.d sequence. Therefore, proceeding similar to what was done for $A$, we now obtain,
	\begin{align*}
	\mathbb{E}[\sup_{0\leq k\leq t}|S_{ij}^f(k)-s_{ij}^f(k)|^2]\leq B_2\mathbb{E}[\sum_m |\hat T_m(t)|]= B_2t,
	\end{align*}
	where the equality follows, since $\sum_m |\hat T_m(t)|=t$.
	 Hence the bounds hold for $R$ and $D$ as well. Hence (\ref{Abound})-(\ref{Dbound}) hold, choosing $B=\max\{B_1,B_2\}$.
	
	To show (\ref{compactLim}), observe that, for a particular queue $Q_i^f$, it follows from the queueing equation that,
	\begin{align*}
	{n}\hat{q}_i^{f,n}(t) &= Q_i^{f,n}(n^2t),\\
	&=Q_i^{f,n}(0)+A_i^{f,n}(n^2t)+R_i^{f,n}(n^2t)-D_i^{f,n}(n^2t).
	\end{align*}
	Subtracting on either side with the corresponding fluid queue $q_i^{f,n}(t^{'})$ at time $t^{'}=n^2t$, we obtain,
	\begin{align*}
	Q_i^{f,n}(n^2t)-\bar q_i^{f,n}(n^2t) &= Q_i^{f,n}(0)-\bar q_i^{f,n}(0)+A_i^{f,n}(n^2t)\\
	&\ \ \ -\bar a_i^{f,n}(n^2t)	+ R_i^{f,n}(n^2t)-r_i^{f,n}(n^2t)\\
	&\ \ \ -D_i^{f,n}(n^2t)+\bar d_i^{f,n}(n^2t).
	\end{align*}
	Hence, we have,
	\begin{align*}
	|Q_i^{f,n}(n^2t)-\bar q_i^{f,n}(n^2t)|^2 &\leq C(|Q_i^{f,n}(0)-\bar q_i^{f,n}(0)|^2\\
	&\ \ \ +|A_i^{f,n}(n^2t)-\bar a_i^{f,n}(n^2t)|^2\\
	&\ \ \ + |R_i^{f,n}(n^2t)-\bar r_i^{f,n}(n^2t)|^2\\
	&\ \ \ +|D_i^{f,n}(n^2t)-\bar d_i^{f,n}(n^2t)|^2).
	\end{align*}
	Choosing $Q_i^{f,n}(0)=\bar q_i^{f,n}(0)$, we obtain, using (\ref{Abound})-(\ref{Dbound}),
	\begin{align}
	\mathbb E|Q_i^{f,n}(n^2t)-\bar q_i^{f,n}(n^2t)|^2\leq C_2n^2t,
	\end{align}
	and hence it follows for the vector process $ Q$ as well, with a higher constant $C_2^{'}$,
	\begin{align}
	\mathbb E| Q^{n}(n^2t)- \bar q^{n}(n^2t)|^2\leq C^{'}_2n^2t.
	\end{align}
	From (\ref{DrainTimeEquation}), since the draining time of the fluid system $ \bar q^{n}$ with initial condition equal to one, $T_n\leq {n}T_1$, the fluid system with initial condition $x$, will be zero at any time greater than $T_n|x|$. Setting $t\geq T_1|x|$, and dividing by $n^2$, we get,
	\begin{align}
	\mathbb{E}|\hat{ q}_x^n(t|x|)|^2\leq C_2^{'}t|x|.
	\end{align}
	Since the bound is uniform over $n$, dividing by $|x|^2$ and taking $|x|\to\infty$ gives the result.	
\end{proof}
With this result, we are ready to prove Theorem \ref{LimitExchangeTheorem}.
\begin{proof}[Proof of Theorem \ref{LimitExchangeTheorem}]
	Since the $\pi_n$ are tight, any subsequence of $\pi_n$ has a convergent subsequence. Let such a limit point be $\pi^*$. Assume that the initial conditions $\hat Z^n(0)$ are distributed as $\pi_n$. Since the systems $\hat Z^n$ converge to a reflected Brownian motion (RBM), the initial condition of the RBM  $\hat w$ will have distribution $\pi^*$. Also, we have shown that finite dimensional distributions of $\hat z^n$ also converge to that of $\hat w$. In particular, $\hat z^n(t)$ weakly converges to $\hat w(t)$ for any $t\geq 0$. But the distribution of $\hat z^n(t)$ is $\pi_n$.  Thus distribution of $\hat w(t)$ is $\pi^*$ for each $t$. Hence $\pi^*$ is the stationary distribution of $\hat w$.
\end{proof}

The Brownian motion $\hat w$ obtained as the limit of $\hat w^n$ is a unidimensional Brownian motion reflected at zero, having drift $b^*$. If $\hat w(\infty)$ has the stationary distribution of $\hat w$, we have, if $b^*<0$,
\begin{align}
\mathbb{P}[\hat w(\infty)<y]=1-\exp(2b^*y/\sigma^2),
\end{align}
from \cite{HarrisonBook}.
\section{Numerical Simulations}
We verify the validity of our approximations on a star network topology (Figure \ref{fig2}). There are two arrival processes, one arriving at node $1$, with node $4$ as its destination. The other arrives at node $2$, with node $5$ as destination. We will also assume that two links which share a common node interfere with each other.
\begin{figure}[h]
	\centering
	\setlength{\unitlength}{1cm}
	\thicklines
	\begin{tikzpicture}[scale=0.6, transform shape]		
	\node[draw,shape=circle, fill={rgb:orange,1;yellow,0;pink,2;green,0}, scale=0.6, transform shape] (v1) at (0,4) {\Huge $1$};
	\node[draw,shape=circle, fill={rgb:orange,1;yellow,0;pink,2;green,0}, scale=0.6, transform shape] (v2) at (0,0) {\Huge$2$};
	\node[draw,shape=circle, fill={rgb:orange,1;yellow,0;pink,2;green,0}, scale=0.6, transform shape] (v3) at (2,2) {\Huge$3$};
	\node[draw,shape=circle, fill={rgb:orange,1;yellow,0;pink,2;green,0}, scale=0.6, transform shape] (v5) at (4,0) {\Huge$5$};
	\node[draw,shape=circle, fill={rgb:orange,1;yellow,0;pink,2;green,0}, scale=0.6, transform shape] (v4) at (4,4) {\Huge$4$};			
	\draw[line width=0.8mm, dashed] (v3) -- (v1)
	(v3) -- (v5)
	(v3) -- (v4)
	(v2) -- (v3);
	\end{tikzpicture}
	\caption{An Example Network}
	\label{fig2}
\end{figure}
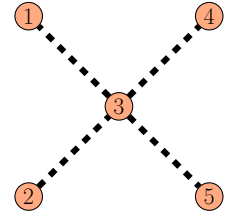
From the diffusion approximation and (\ref{StatDistbtnExp}), we can see that the mean of the Brownian motion corresponding to the queue can be approximated by the vector $\phi \frac{\sigma^2}{2b^*}$. The Brownian motion is a limit of the scaled process of the form $\frac{Q(n^2t)}{n}$. For a large $n$, we may approximately write,
\begin{align*}
Q(n^2t)\approxeq n\phi\frac{\sigma^2}{2b^*}.
\end{align*}
If we run the simulations for a time $n$, we may further also approximately write $b^*=n|\lambda-\lambda^*|$. Hence, we have the approximation,
\begin{align}
Q(\infty)\approxeq \phi \frac{\sigma^2}{2|\lambda-\lambda^*|}.
\end{align}
We assume that the channels are independent and identically distributed, with the distribution being uniform over the set $\{0,1,2,3\}$. We consider the arrival vector $(\lambda_1,\lambda_2)=(\lambda,\lambda)$, i.e., increasing along the line of unit slope. In this case $\lambda^*=(0.65,0.65)$. We will be looking at the total queue length of the flow $1\to3\to 4$. The value of $\sigma^2$ is $2\lambda+\hat{\sigma}^2$. The vector $\phi$ is approximately $(\frac{1}{\sqrt{2}},\frac{1}{\sqrt{2}})$ (The value of $\bar Q$ for both queues is set at $100$). We take $\hat{\sigma}^2\approxeq 8$. The values of the total queue length of the flow $1\to 3\to 5$ are listed in Table \ref{tab:2} (owing to symmetry both queue lengths are same), for simulation runs of length $10^5$, averaged over 20 simulations. It can be seen that the approximations follow the queue length closely. Moving within a small distance of the point $\lambda^*$ will require more iterations for the effects to show.\\
\begin{table}[h]
	\centering
	\begin{tabular}{|l|l|l|}
		\hline
		Arrival Rate $\lambda$ & Mean Queue Length & Approximation\\
		\hline
		0.64 & 233 & 232\\
		\hline
		0.641 & 263 & 258\\
		\hline
		0.642 & 319 & 290\\
		\hline
		0.643 & 367 & 332\\
		\hline
		0.644 & 381 & 387\\
		\hline
		0.645 & 479 & 465\\
		\hline
		0.646 & 517 & 581\\
		\hline
		0.647 & 568 & 775\\
		\hline
	\end{tabular}
	\caption{\textcolor{black}{Approximation of Queues}}
	\label{tab:2}
\end{table}
\indent In order to demonstrate that the algorithm can satisfy different  QoS requirements, we simulate the network at three points in the interior of the capacity region. The mean queue length asked from the flows is $250$ and $100$ respectively. We also pick $a_2$ in the expression of $\alpha$ for the second flow to be $4$, since it requires a tighter constraint to be met. In Table \ref{tab:3}, the first column gives the arrival rate, the second  shows the target queue length for the two flows, and the final column shows the queue length obtained. We see that the end-to-end mean queue length requirement is met for both the flows till rate $0.64$. At $0.641$ there is substantial departure. The capacity boundary is at $0.65$. Thus, our algorithm can provide QoS under heavy traffic as well.
\begin{table}[h]
	\centering
	\begin{tabular}{|l|l|l|}
		\hline
		$\lambda$ & Mean Queue Length Asked &  Queue Length Obtained\\
		\hline
		0.63 & (250,100) & (213,98)\\
		\hline
		0.64 & (250,100) & (264,110)\\
		\hline
		0.641 & (250,100) & (292,120)\\
		\hline
	\end{tabular}
	\caption{\textcolor{black}{Mean Queue Length Target and Obtained}}
	\label{tab:3}
\end{table}
\section{Conclusion}
We have presented an algorithm for scheduling in multihop wireless networks that guarantees end-to-end mean delays of the packets transmitted in the network. The algorithm is throughput optimal. Using diffusion scaling, we obtain the Brownian approximation of the algorithm. We also prove theoretically that the  stationary distribution of the limiting Brownian motion is the distribution of a sequence of scaled systems, and is consequently a good approximation for the stationary distribution of the original system. Using these relations, we obtain an approximation for queue lengths, and demonstrate via simulations that these are accurate.
\bibliography{survey}
\bibliographystyle{IEEEtran}
\appendix
\subsection{Properties of $\hat v$}
	Here, we will complete the proof of Theorem \ref{CompleteTheorem} by showing that, along the subsequence $\mathcal{N}_2$, we have a limit $\hat v$ of $\hat v^n$, which has the properties:
	\begin{enumerate}
		\item $\hat v(t)$ is continuous.
		\item $\hat v(t)$ is finite for $t\in[0,\infty)$
		\item $\hat v(0)=0$
		\item 	If $\hat{w}(t)>0$, then $t$ is not a point of increase of $\hat{v}$.
	\end{enumerate}
To prove these properties, we need to study a set of fluid sample paths.\\
\subsection{Rescaled Fluid Paths}
To study diffusion properties on an interval $[t_n,t_n+\delta]$ for $\delta>0$, we look at fluid paths on the time $[nt_n,nt_n+n\delta]$. We consider the following family of fluid paths, started at a time $T$ apart from each other. For a time evolving process $f(t)$, define the operator $\Theta(\tau)$ as the shift, corresponding to the process started at time $\tau$. 

Consider the fluid scaled process  $z^n$. Consider a shifted form of these processes,
\begin{align}
\tilde z^{m,l}=\Theta(mt_m+Tl)z^m,
\end{align}
where $\Theta(x)f$ denotes the function $f$ started at $x$. Define the family of processes,
\begin{align}
\mathcal{Z}=\{\tilde z^{m,l(m)},m\in\mathcal{N}_3\},
\end{align}
where the index set $\mathcal{N}_3$ has the property that as $m\to\infty$ along $\mathcal{N}_3$, $t_m\to t$. Using these fluid paths we can obtain properties of the diffusion scaled process, since an interval of time $[mt,mt+m\delta]$ on the diffusion scale corresponds to a time $[t,t+\delta]$ on the diffusion scale.

If $t_m\to t$, and $l(m)\in[0,2\delta m/T-1]$, a time $s\in[0,T]$ for the path $\tilde z(m,l(m))$, for $m$ large enough, corresponds to a time,
\begin{align}
s^{'}=t_m+l(m)T/m+s/m \in [t-3\delta,t+3\delta]^+.
\end{align}
We have the following results regarding the behaviour of the fluid sample paths, from \cite{StolyarSSC}. The first is presented without proof.
\begin{lem}\label{LemmaOrigi}
	Consider the family $\mathcal Z$ with an associated sequence $t_m$, constants $T$ and $\delta$, both positive. Assume that $|\tilde q^{m,l(m)}|\in[c_1,c_2]$, with $0\leq c_1\leq c_2<\infty$, and $l(m)\in[0,2\delta m/T-1]\cap \mathbb{Z}$. Then, there is a subsequence $m_k$ along which, $\tilde z^{m,l(m)}$ converges to a fluid limit $z$, u.o.c, with $|q(0)|\in[c_1,c_2]$. 
\end{lem}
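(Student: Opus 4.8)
The plan is a relative--compactness argument in the spirit of the fluid--limit constructions of \cite{Dai95,StolyarSSC}: along a suitable subsequence the family $\mathcal Z$ is uniformly bounded and asymptotically equicontinuous on every compact time interval, so by an Arzel\`a--Ascoli argument together with a diagonalization over an exhausting sequence of intervals it has a u.o.c.-convergent subsequence, and one then checks that the limit is a fluid solution with $|q(0)|\in[c_1,c_2]$.

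First I would prove asymptotic equicontinuity, treating the components of $\tilde z^{m,l(m)}$ separately. Fix a horizon $\tau$. Because $t_m\to t$ and $Tl(m)\le 2\delta m$, the shift $mt_m+Tl(m)$ is $O(m)$ in fluid time, so every fluid--time argument touched by $\tilde z^{m,l(m)}$ on $[0,\tau]$ stays within $[0,mT']$ for a fixed $T'$ (say $T'=\sup_m t_m+2\delta+\tau+1$). Applying the uniform functional law of large numbers estimates (\ref{aWeakLim})--(\ref{eWeakLim}), which hold a.s.\ on the underlying probability space for this $T'$, gives $\tilde a^{m,l(m)}(\cdot)\to\lambda\,\cdot$ and $\tilde e^{m,l(m)}(\cdot)\to\gamma\,\cdot$ u.o.c.; in particular their prelimit oscillations over intervals of length $\le\eta$ are $\le C\eta+o(1)$ uniformly in the starting point. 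For $\tilde s^{m,l(m)}$ (hence for $\tilde r^{m,l(m)}$ and $\tilde d^{m,l(m)}$, which are finite sums of entries of $\tilde s$) only a bounded number of packets can be served on any link in any slot, since $\mathcal H$ and the rate function are finite; likewise $\tilde g^{m,l(m)}$ counts a subset of elapsed slots. Each of these scaled processes therefore has increments bounded by $C(t_2-t_1)+O(1/m)$ for a deterministic constant $C$, i.e.\ a uniform Lipschitz modulus in the limit. Finally $\tilde q^{m,l(m)}$ inherits equicontinuity from the queueing identity $Q=Q(0)+A+R-D$. Combining these oscillation bounds with the hypothesis $|\tilde q^{m,l(m)}(0)|\in[c_1,c_2]$ and a telescoping over unit intervals yields, for each $\tau$, a bound $\sup_{s\le\tau}|\tilde z^{m,l(m)}(s)|\le K(\tau,c_2)$ that is uniform in large $m$.

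Given uniform boundedness and asymptotic equicontinuity on every $[0,\tau]$, Arzel\`a--Ascoli plus diagonalization over $\tau=1,2,\dots$ produces a subsequence $m_k$ along which $\tilde z^{m_k,l(m_k)}\to z$ u.o.c.\ for some continuous $z$. Passing to the limit in the scaled recursions identifies $z$ as a fluid limit of the system: its $a$-component equals $\lambda\,\cdot$, its $e$-component equals $\gamma\,\cdot$, and the remaining components satisfy the fluid model equations, with the scheduling constraint inherited from the maximizing allocation in (\ref{optFun}). Since $[c_1,c_2]$ is closed, $|q(0)|=\lim_k|\tilde q^{m_k,l(m_k)}(0)|\in[c_1,c_2]$.

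The main obstacle is the uniformity required in the first step: the shifts $mt_m+Tl(m)$ grow linearly in $m$, so an ordinary fixed--origin functional law of large numbers does not suffice, and one genuinely relies on the uniform-over-a-growing-window estimates (\ref{aWeakLim})--(\ref{eWeakLim}) (equivalently, on working on the Skorohod space where these hold a.s., as in the proof of Theorem~\ref{CompleteTheorem}). A secondary subtlety --- already handled in the fluid analysis behind the stability result and in \cite{PaperICC,StolyarSSC} --- is extracting the limiting service and routing dynamics through the maximizing allocation (\ref{optFun}), where non--uniqueness of the maximizer is harmless because the set of fluid limits is closed under the dynamics.
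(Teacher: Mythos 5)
The paper never actually proves this lemma---it is imported from \cite{StolyarSSC} without proof (the appendix environment headed ``Proof of Lemma \ref{LemmaOrigi}'' in fact argues Lemma \ref{RestFluidThm})---so the comparison is with the cited argument rather than with anything in the text. Your proof is essentially that standard argument and is correct: uniform boundedness and asymptotic equicontinuity of the shifted fluid-scaled components via the growing-window estimates (\ref{aWeakLim})--(\ref{eWeakLim}) and the bounded per-slot service, Arzel\`a--Ascoli plus diagonalization to extract a u.o.c.\ limit, identification of the limit with the fluid model equations, and closedness of $[c_1,c_2]$ for $|q(0)|$; you also correctly single out the two genuinely delicate points, namely that the $O(m)$ shifts require the uniform-in-starting-point SLLN rather than a fixed-origin one, and that the limiting service dynamics under (\ref{optFun}) are identified by the same verification already used in the fluid-stability analysis of \cite{PaperICC}.
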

For a fluid limit $q(t)$, define the Lyapunov function,
\begin{align}
\mathcal L_1(q(t))=-\int_t^{\infty}\exp(t-\tau)\sum_{i,f} \alpha(q^f(\tau))q_{i}^f(\tau)\dot{q}_i^f(\tau) d\tau. \label{LyapunovFuncOne}
\end{align}
It can be shown \cite{PaperICC} that this function is non negative, finite and its time derivative  is negative. If, along $q(t)$, if $\lim_{t\to\infty}\mathcal L_1(q(t))=0$, define $\mathcal L_3=\mathcal L_1$. Else, if $\lim_{t\to\infty}\mathcal L_1(q(t))=\mathcal L_*>0$, define $\mathcal L_3(q(t))=\frac{\mathcal L_1(q(t))}{\mathcal L_*}-1$. Clearly, $\mathcal L_3(q(t))$ decreases to zero along any fluid path. Then, we have the following result.
\begin{lem}\label{RestFluidThm}
	Under our scheduling policy, assume that there is a subsequence such that, along this, $\hat v^n\to\hat v$. Suppose further that along this subsequence, we have
	\begin{align}
	s_m &\to s\geq 0, \hat{w}^m(s_m)\to K>0, \\
	 \limsup_{m\to\infty} &|\hat{q}^m(s_m)|<K_1K,
	\end{align}
	for some fixed $K_1>1$. Let $\delta>0$ be chosen such that,
	\begin{align}
	\epsilon=O_{\hat u}([s-3\delta,s+3\delta]^+)<0.5K,
	\end{align}
	where $O_{\hat u}[a,b]=\sup_{x,y\in[a,b]}|u(x)-u(y)|$. 	Let $K_2=\beta^2K_1K+2\epsilon$. Then, for any $\epsilon_2>0$ sufficiently small, there exists a time $T$ such that, for $m$ sufficiently large, we have,
	\begin{align}
	K-2\epsilon &<\tilde w^{m,0}(u)<K_2,\text{ for }\ u\in[0,T],\\
	(K-2\epsilon)/{\beta} &<|\tilde q^{m,0}(u)|<2\beta K_2 .
	\end{align}
	For $l\in [1,2\delta rT^{-1}-1]\cap\mathbb{Z}$, we have,
	\begin{align}
	\mathcal L_3(\tilde q^{m.l}(0)) &<2\epsilon_2,\label{OneKey}\\
	\mathcal L_3(\tilde q^{m.l}(T)) &<2\epsilon_2, \label{TwoKey}\\
	\mathcal L_3(\tilde q^{m.l}(u)) &<3\epsilon_2,\text{ for }\ u\in[0,T],\label{ThreeKey}\\
	\tilde v^{m,l}(u) &=\tilde v^{m,l}(u)-\tilde v^{m,l}(0)=0,\text{ for }\ u\in[0,T],\label{FourKey}\\
	K-2\epsilon &<\tilde w^{m,l}(u)<K_2,\text{ for }\ u\in[0,T],\label{FiveKey}\\
	(K-2\epsilon)/{\beta} &<|\tilde q^{m,l}(u)|<2\beta K_2 ,\label{SixKey}
	\end{align}
\end{lem}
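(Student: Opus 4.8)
The plan is to establish the six displayed bounds \eqref{OneKey}--\eqref{SixKey} (together with the two preliminary bounds on $\tilde w^{m,0}$ and $|\tilde q^{m,0}|$) by a bootstrap argument that propagates control from the ``anchor'' path indexed by $l=0$ to all later paths $l\in[1,2\delta rT^{-1}-1]\cap\mathbb{Z}$, using the Lyapunov function $\mathcal L_3$ to quantify proximity to the invariant manifold. First I would fix the time horizon: choose $\delta$ so that the oscillation bound $\epsilon=O_{\hat u}([s-3\delta,s+3\delta]^+)<0.5K$ holds, which is possible by continuity of $\hat u$ (Lemma~\ref{uTheorem}); then, because $\mathcal L_3(\tilde q(u))$ decreases to zero along every fluid path, pick $T$ large enough that any fluid limit starting with $|q(0)|$ in the relevant compact range $[(K-2\epsilon)/\beta,\,2\beta K_2]$ has $\mathcal L_3(q(u))<\epsilon_2$ for all $u\geq T/2$, say. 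The constants $\beta$ come from the invariant-point relation $\alpha(\phi)\phi=k\psi$, which bounds $|q|$ above and below by multiples of $w=\langle\psi,q\rangle$ whenever $q$ is close to the invariant manifold.

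The core of the argument is an induction on $l$. For $l=0$: since $\hat w^m(s_m)\to K$ and $\limsup_m|\hat q^m(s_m)|<K_1K$, Lemma~\ref{LemmaOrigi} gives a fluid limit of $\tilde z^{m,0}=\Theta(ms_m)z^m$ with initial norm in $[K/\beta', K_1K]$; the oscillation bound on $\hat u$ over $[s-3\delta,s+3\delta]^+$ controls how much $\tilde w^{m,0}$ can move over a window of fluid-length $T$ (which corresponds to diffusion-length $T/m\to 0$), yielding $K-2\epsilon<\tilde w^{m,0}(u)<K_2$ on $[0,T]$ for $m$ large, and the $\mathcal L_3$-decay plus the invariant-manifold sandwich gives the matching bound on $|\tilde q^{m,0}(u)|$. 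For the inductive step $l\to l+1$: the bounds \eqref{FiveKey}--\eqref{SixKey} at parameter $l$ say in particular that $\tilde w^{m,l}(T)\in(K-2\epsilon,K_2)$ and that $|\tilde q^{m,l}(T)|$ lies in the compact range; but $\tilde z^{m,l+1}(0)=\tilde z^{m,l}(T)$ (the paths are consecutive shifts by $T$), so one re-applies Lemma~\ref{LemmaOrigi} to extract a fluid limit at parameter $l+1$ with initial data in the same compact range, and repeats the oscillation argument. The key point that \emph{no regulator fires}, i.e. \eqref{FourKey} $\tilde v^{m,l}(u)=0$ on $[0,T]$, follows because the workload stays bounded away from zero ($\tilde w^{m,l}(u)>K-2\epsilon>0$), so $0$ is not a point of increase of the reflection term on this window; this is exactly what lets us treat $\tilde z^{m,l}$ as an (unreflected) fluid path and invoke the $\mathcal L_3$ machinery of \cite{PaperICC}. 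Bounds \eqref{OneKey}--\eqref{ThreeKey} are then read off: \eqref{OneKey} and \eqref{TwoKey} hold because the initial data of path $l$ equals the terminal data of path $l-1$, which after the full horizon $T$ has $\mathcal L_3<\epsilon_2$ by the decay choice of $T$; \eqref{ThreeKey} is the in-between estimate, using that $\mathcal L_3$ cannot increase much over the short additional time and is monotone along true fluid limits (with an $\epsilon_2$ slack absorbing the approximation error between $\tilde z^{m,l}$ and its fluid limit for finite $m$).

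The main obstacle I anticipate is the interplay between the \emph{finite}-$m$ prelimit paths $\tilde z^{m,l}$ and their fluid limits: all the clean statements ($\mathcal L_3$ monotone decreasing, invariant-manifold sandwich, no regulator) are properties of the limiting fluid paths, whereas \eqref{OneKey}--\eqref{SixKey} are asserted for the prelimit $\tilde z^{m,l}$ at large but finite $m$. Bridging this requires a uniform-over-$l$ argument: the number of paths, $2\delta rT^{-1}$, grows with $r=r(m)$, so one cannot simply pass to the limit path-by-path. The resolution is that properties \eqref{aWeakLim}--\eqref{eWeakLim} hold \emph{simultaneously} for all shifts (this is the content of the $\max_{0\le\ell\le nT}$ in those hypotheses, which is exactly why they are stated in that oscillation form), giving a single good event on which \emph{every} shifted path is uniformly close to \emph{some} fluid limit; on that event the compactness in Lemma~\ref{LemmaOrigi} and the deterministic $\mathcal L_3$-decay give the bounds with constants independent of $l$. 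Handling this uniformity carefully — in particular checking that the ``large enough $m$'' and ``$\epsilon_2$ sufficiently small'' thresholds can be chosen before $l$ — is where the real work lies; once it is in place, the six inequalities follow by the bootstrap just described, mirroring the corresponding step in \cite{StolyarSSC}.
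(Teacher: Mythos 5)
Your scaffolding matches the paper's: pick $T$ from the decay of $\mathcal L_3$ along fluid limits, anchor the argument at $l=0$ using the compactness in Lemma~\ref{LemmaOrigi} and the contraction property $\sup_{t\ge 0}|q(t)|<\beta|q(0)|$, then induct on $l$ using $\tilde q^{m,l}(T)=\tilde q^{m,l+1}(0)$ and the oscillation bound $\epsilon$ on $\hat u$ to propagate the workload and queue-norm bounds. However, there is a genuine gap at the central step \eqref{FourKey}. You assert that $\tilde v^{m,l}$ cannot increase on $[0,T]$ ``because the workload stays bounded away from zero, so $0$ is not a point of increase of the reflection term.'' But $V(t)=X(t)+\langle\psi,R(t)\rangle-\langle\psi,D(t)\rangle$ is not a Skorohod regulator that fires only at $w=0$: it is the cumulative \emph{unused} service in the direction $\psi$, and it increases whenever the scheduler's actual allocation fails to attain the maximal $\psi$-service $\mu_{H(t)}$. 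This can perfectly well happen at strictly positive workload, e.g.\ when the queue vector is far from the invariant direction so that the weighted backpressure optimization \eqref{optFun} selects a schedule that is not $\arg\max_{\mu}\langle\psi,\mu\rangle$, or when individual queue differentials $Q_{ij}^f$ vanish even though $\langle\psi,Q\rangle>0$. Indeed, the statement ``$\hat v$ increases only when $\hat w=0$'' is precisely property (4) of $\hat v$ that this lemma is being used to establish for the limit; assuming it at the prelimit level is circular.

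The paper closes exactly this hole with the state space collapse mechanism: because the induction carries the Lyapunov bounds \eqref{OneKey}--\eqref{ThreeKey}, the queue state on $[0,T]$ is within $\mathcal L_3<3\epsilon_2$ of the collapse direction, and for $\epsilon_2$ small enough the slotwise optimization \eqref{optFun}, rewritten in the variables $\tilde\mu_i^f=\sum_j\mu_{ij}^f-\sum_k\mu_{ki}^f$, coincides with $\arg\max_{\tilde\mu}\sum_{i,f}\psi_i^f\tilde\mu_i^f$; hence the delivered $\psi$-service matches $X$ and $\tilde v^{m,l}$ stays flat. This is also why the Lyapunov estimates and the workload bounds must be proved \emph{jointly} in the induction: flatness of $\tilde v$ is what allows every increase of $\tilde w^{m,l}$ to be attributed to $\tilde u^m$, whose oscillation is bounded by $\epsilon$, yielding \eqref{FiveKey}--\eqref{SixKey}; conversely the resulting compact range of $|\tilde q^{m,l}|$ feeds back into the $\mathcal L_3$ decay at the next stage. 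Your proposal, lacking the policy-specific link between small $\mathcal L_3$ and $\psi$-optimality of the chosen schedule, cannot justify \eqref{FourKey}, and without it the bootstrap does not propagate. (Your remarks on uniformity over $l$ and the role of \eqref{aWeakLim}--\eqref{eWeakLim} are apt, but they address a secondary issue, not this one.)
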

The proof of this Lemma is an adaptation of the proof of Lemma 7 \cite{StolyarSSC} to our case. We present the main arguments below.
\begin{proof}[Proof of Lemma \ref{LemmaOrigi}]
	Observe that, since $\mathcal L_3$ is decreasing to zero, there exists a time $T$, such that,
	\begin{align}
	\mathcal L_3(t)\leq \epsilon_2,\ \forall t\geq T.
	\end{align}
	Consider the case $l=0$.
	First, observe that, for $m$ large enough,
	\begin{align}
	\limsup_{m\to\infty}\sup_{u\in[0,T]}|\tilde q^{m,0}(u)|<\beta\limsup_{m\to\infty}|\tilde q^{m,0}(u)|
	\end{align}
	This is true because, if it were not, using Lemma \ref{LemmaOrigi},  we could have a sequence of $\tilde z^{m,0}$ which converge to a fluid limit $z$ with  $|q(u)|\geq\beta|q(0)|$ for some $u$. However, this is not possible since,
	\begin{align}
	\sup_{t\geq 0}|q(t)|<\beta|q(0)|.
	\end{align}
	Alongwith our assumptions on $m$, this implies that,
	\begin{align}
	\limsup_{m\to\infty}\sup_{u\in[0,T]}|\tilde q^{m,0}(u)| &<\beta\limsup_{m\to\infty}|\tilde q^{m,0}(u)|<\beta K_1K, \\
	\limsup_{m\to\infty}\sup_{u\in[0,T]}\tilde w^{m,0}(u) &<\beta^2 K_1K.
	\end{align}
	Using the non decreasing property of $w$, we can show,
	\begin{align}
	\liminf_{m\to\infty}\inf_{u\in[0,T]}\tilde w^{m,0}(u)\geq K.
	\end{align}
	Choosing $T$ large enough, we can have,
	\begin{align}
	\mathcal L_3(\tilde q^{m,0}(T))<2\epsilon_2.
	\end{align}
	Since $\tilde q^{m,0}(T)=\tilde q^{m,1}(0)$, it also follows that,
	\begin{align}
	\mathcal L_3(\tilde q^{m,1}(0))<2\epsilon_2.
	\end{align}
	Now, consider the following properties, for $l\in[1,2\delta m/T-1]$.
	\begin{align}
	\mathcal L_3(\tilde q^{m.l}(0)) &<2\epsilon_2,\label{OneKey1}\\
	\mathcal L_3(\tilde q^{m.l}(T)) &<2\epsilon_2, \label{TwoKey1}\\
	\mathcal L_3(\tilde q^{m.l}(u)) &<3\epsilon_2,\text{ for }\ u\in[0,T],\label{ThreeKey1}\\
	\tilde v^{m,l}(u) &=\tilde v^{m,l}(u)-\tilde v^{m,l}(0)=0,\text{ for }\ u\in[0,T],\label{FourKey1}\\
	K-2\epsilon &<\tilde w^{m,l}(u)<K_2,\text{ for }\ u\in[0,T],\label{FiveKey1}\\
	(K-2\epsilon)/{\beta} &<|\tilde q^{m,l}(u)|<2\beta K_2 .\label{SixKey1}
	\end{align}
	We will show these hold, by induction on $l$. Asssume the properties hold for all $l<l_1$, but at least one of the abover properties is violated for $l=l_1$. Since the properties hold up to $l=l_1-1$, we have that,
	\begin{align}
	\mathcal L_3(\tilde q^{m,l_1}(0))=\mathcal L_3(\tilde q^{m,l_1-1}(T))<2\epsilon_2.
	\end{align}
	Since $w$ is non decreasing, we have,
	\begin{align}
	\tilde w^{m,l_1}(0)>K-2\epsilon.
	\end{align}
	From the relation between $|q|$ and $w$ it follows that,
	\begin{align}
	|\tilde q^{m,l_1}(0)|\in\left[\frac{K-2\epsilon}{\beta},2\beta K_1\right].
	\end{align}
	Thus, for a choice of $T$ appropriately large, we will have,
	\begin{align}
	\mathcal L_3(\tilde q^{m.l_1}(0)) &<2\epsilon_2,\\
	\mathcal L_3(\tilde q^{m.l_1}(T)) &<2\epsilon_2, \\
	\mathcal L_3(\tilde q^{m.l_1}(u)) &<3\epsilon_2,\text{ for }\ u\in[0,T]. \label{smallLyapEq}
	\end{align}
	To show the non-increasing property of $\tilde v$ as in (\ref{FourKey1}), observe that the queue length and workload are strictly positive as shown above. Since we had,
	\begin{align}
	v^{m,l}(t)=x^{m,l}(t)-\langle\psi,d^{m,l}(t)-r^{m,l}(t)\rangle,
	\end{align}
	and since our optimization is such that we choose the allocation vector ${\mu}^*$ such that,
	\begin{align}
	\mu^* &=\arg_{\mu}\max \sum_{i,j,f}\alpha(q_i^f)q_{ij}^f \mu_{ij}^f,\\
	&= \arg_{\mu}\max \sum_{i,j,f}\alpha(q_i^f)(q_i^f-q_j^f)\mu_{ij}^f.
	\end{align}
	The second equation holds because the allocation vector $\dot s_{ij}^f(t)$ is zero when $q_i^f-q_j^f\leq 0$. This optimization may be rewritten as a function of new variables $\tilde{\mu}$, where $\tilde{\mu}_i^f=\sum_j\mu_{ij}^f-\sum_k\mu_{ki}^f$. We have the optimal $\tilde{\mu}^*$ given by, 
	\begin{align}
	\tilde{\mu}^*= \arg_{\tilde{\mu}}\max \sum_{i,j,f}\alpha(q_i^f)q_i^f\tilde{\mu}_{i}^f.
	\end{align}
	Since (\ref{smallLyapEq}) holds, it will be that (choosing $\epsilon_2$ small enough), this is exactly the result of the optimization,
	\begin{align}
	\tilde{\mu}^*= \arg_{\tilde{\mu}}\max \sum_{i,j,f}\psi_i^f\tilde{\mu}_{i}^f,
	\end{align}
	since the function $\mathcal L_3$ indicates how close we are to the collapse vector $\psi$. From the definition of $X$, it follows that the scaled $\tilde x$ attains the value given above, and hence $\tilde v$ does not increase in the interval.
	
	Since $\tilde v$ remains at zero, we can see that any increase in $\tilde w$ is an increase in $\tilde u$, and hence,
	\begin{align}
	\tilde w^{m,l_1}(u)=\tilde w^{m,0}(T)+\tilde u^m(t_m+l_1T/m+u/m)-\tilde u^m(t_m+T/m).
	\end{align}
	Since the oscillation of $\hat u$ is bounded and since $\tilde u^m\to\hat u$, the bounds (\ref{FiveKey1}) and (\ref{SixKey1}) also follow for $l_1$. Hence, we have inductively shown that the properties (\ref{OneKey})-(\ref{SixKey}) hold.
\end{proof}
With the above result, we also obtain the properties of $\hat v$.
\subsection{Proof of the properties of $\hat v$}
	The proof of this result follows as an application of  Lemma \ref{RestFluidThm}, as in the proof of Theorem 1 in \cite{StolyarSSC}.
\end{document}